\def\E{\mathbb{E}}
\def\P{\mathbb{P}}
\def\R{\mathbb{R}}
\def\VV{\mathcal{V}}
\newcommand{\Markov}[2]{\underset{#1}{\overset{#2}{\rightleftharpoons}}}
\newcommand{\kstep}{k_{\text{step}}}
\newcommand{\kon}{k_{\text{on}}}
\newcommand{\koff}{k_{\text{off}}}
\newcommand{\lstep}{\lambda_{\text{step}}}
\newcommand{\lon}{\lambda_{\text{on}}}
\newcommand{\loff}{\lambda_{\text{off}}}
\newcommand{\pon}{p_{\text{on}}}
\newcommand{\ton}{\tau_{\text{on}}}
\newcommand{\toff}{\tau_{\text{off}}}
\newcommand{\Ron}{R_{\text{on}}}
\newcommand{\Roff}{R_{\text{off}}}
\newcommand{\Ton}{T_{\text{on}}}
\newcommand{\Toff}{T_{\text{off}}}
\newcommand{\Yon}{Y_{\text{on}}}
\newcommand{\Yoff}{Y_{\text{off}}}
\newcommand{\J}{\textbf{J}}
\newcommand{\ol}{\overline}
\newcommand{\x}{\overline{x}}
\newcommand{\z}{\overline{z}}
\newcommand{\dd}{\text{d}}
\newcommand{\eps}{\varepsilon}
\renewcommand{\P}{\mathbb{P}}
\newcommand{\expect}{\mathbb{E}}
\theoremstyle{plain}
\theoremstyle{remark}
\theoremstyle{conjecture}
\newtheorem{prop}[theorem]{Proposition}
\newcommand{\TheTitle}{Analysis of non-processive molecular motor transport using renewal reward theory}
\newcommand{\ShortTitle}{Non-processive motor transport}
\newcommand{\TheAuthors}{Christopher E. Miles, Sean D. Lawley, and James P. Keener}
\headers{\ShortTitle}{\TheAuthors}
\title{{\TheTitle}\thanks{Submitted to the editors \today.
\funding{This work was supported by NSF grant DMS-RTG 1148230. CEM and JPK were also supported by NSF grant DMS 1515130. SDL was also supported by NSF grant DMS 1814832.}}
}
\author{Christopher E. Miles\thanks{Department of Mathematics, University of Utah, Salt Lake City, UT 84112 USA ({\email{miles@math.utah.edu}}).}
\and Sean D. Lawley\thanks{Department of Mathematics, University of Utah, Salt Lake City, UT 84112 USA ({\email{ lawley@math.utah.edu}}).}
\and James P. Keener\thanks{Departments of Mathematics and Bioengineering, University of Utah, Salt Lake City, UT 84112 USA ({\email{keener@math.utah.edu}}).}}  
\begin{document}

\maketitle

\begin{abstract}
We propose and analyze a mathematical model of cargo transport by non-processive molecular motors. In our model, the motors change states by random discrete events (corresponding to stepping and binding/unbinding), while the cargo position follows a stochastic differential equation (SDE) that depends on the discrete states of the motors. The resulting system for the cargo position is consequently an SDE that randomly switches according to a Markov jump process governing motor dynamics. To study this system we (1) cast the cargo position in a renewal theory framework and generalize the renewal reward theorem and (2) decompose the continuous and discrete sources of stochasticity and exploit a resulting pair of disparate timescales. With these mathematical tools, we obtain explicit formulas for experimentally measurable quantities, such as cargo velocity and run length. Analyzing these formulas then yields some predictions regarding so-called non-processive clustering, the phenomenon that a single motor cannot transport cargo, but that two or more motors can. We find that having motor stepping, binding, and unbinding rates depend on the number of bound motors, due to geometric effects, is necessary and sufficient to explain recent experimental data on non-processive motors.
\end{abstract}

\begin{keywords}
molecular motors, intracellular transport, reward renewal theory, stochastic hybrid systems, switching SDEs
\end{keywords}
\begin{AMS}
92C05, 
60K20, 
60J28, 
34F05, 
60H10 
\end{AMS}

\section{Introduction}

Active intracellular transport of cargo (such as organelles) is critical to cellular function. The primary type of active transport involves molecular motors, which alternate between epochs of active transport (discrete stepping) along a microtubule and epochs of passive diffusion when the motors are unbound from the microtubule. Stochastic modeling of this fundamental process has a rich and fruitful history (see the review \cite{Bressloff2013}).

In both experimental and modeling studies, \emph{processive} motors have received considerable attention. Processive motors are characterized by taking hundreds of steps along a microtubule before unbinding. In contrast, \emph{non-processive} motors (such as most members of the kinesin-14 family) take very few (1 to 5) steps before unbinding from a microtubule \cite{Case1997,Endow2000}. Non-processive motors are crucial to a number of cellular processes, including directing cytoskeletal filaments \cite{Shaklee2008}, driving microtubule-microtubule sliding during mitosis \cite{Fink2009}, and retrograde transport along microtubules in plants \cite{Yamada2017}. Here, we focus on motor behavior during transport.

Some curious properties of non-processive motor transport were found in \cite{Furuta2013}. \emph{One} non-processive (Ncd) motor has extremely limited transport ability, measured by both velocity and run length (distance traveled before detaching from a microtubule). However, \textit{two} non-processive motors somehow act in unison to produce significant directed motion, a phenomenon termed ``clustering.'' This observation is supported by the subsequent studies \cite{Jonsson2015,Mieck2015}, where similar experiments were performed creating a mutant of attached non-processive kinesin-14 motors, and processivity emerges. Moreover, the authors of \cite{Furuta2013} note that adding more Ncd motors beyond two further increases transport ability. In contrast, one processive motor (kinesin-1) is sufficient to produce transport, and additional motors do not significantly increase transport ability \cite{Furuta2013,Shubeita2008}. Other interesting facets of transport by non-processive motors include the emergence of processive transport in the presence of higher microtubule concentration \cite{Furuta2008} or opposing motors \cite{Hodges2009}.

In this work, we formulate and analyze a mathematical model to investigate the natural question: {how do non-processive motors cooperate to transport cargo?} Our model predicts that non-processive motor stepping, binding, and unbinding rates must depend on the number of bound motors, and that this dependence is a key mechanism driving the collective transport of non-processive motors. We note that such dependence has been observed in experiments \cite{Feng2017,Grotjahn2017} and in simulations of detailed computational models \cite{Korn2009, Erickson2011,Lombardo2017,Furuta2008}, all stemming from geometric effects of cargo/motor configuration. 

Non-processive motors are notoriously difficult to study experimentally, because they take only a few steps before detaching. For this same reason, it is not clear how to best model non-processive motors, or known if existing modeling frameworks, such as mean-field methods \cite{Julicher1995,Duke2000a} or averaging the stepping dynamics into an effective velocity \cite{McKinley2011a}, are appropriate. Hence, our model explicitly includes the discrete binding, unbinding, and stepping dynamics of each motor, as well as the continuous tethered motion of the cargo.

Mathematically, our model takes the form of a randomly switching stochastic differential equation (SDE), and thus merges continuous dynamics with discrete events. The continuous SDE dynamics track the cargo position, while the discrete events correspond to motor binding, unbinding, and stepping. Our model is thus a \emph{stochastic hybrid system} \cite{Bressloff2014}, which are often two-component processes, $(J(t),X(t))_{t\ge0}\in\mathcal{I}\times\R^{d}$, where $J$ is a Markov jump process on a finite set $\mathcal{I}$, and $X$ evolves continuously by
\begin{align}\label{introeq}
\dd X(t)=F_{J(t)}(X(t))\, \dd t+\sigma\, \dd W(t),
\end{align}
where $\{F_{j}(x)\}_{j\in\mathcal{I}}$ is a given finite family of vector fields, $\sigma\ge0$, and $W$ is a Brownian motion. That is, $X$ follows an SDE whose righthand side switches according to the process $J$.

However, our model differs from most previous hybrid systems in some key ways. First, the set of possible continuous dynamics (e.g.\ possible righthand sides of (\ref{introeq})) for our model is infinite. Second, the new righthand side of (\ref{introeq}) that is chosen when $J$ jumps depends on the value of $X$ at that jump time, although the rates dictating $J$ are taken to be independent of $X$. 

We employ several techniques to analyze our model and make predictions regarding non-processive motor transport. First, we cast our model in a renewal theory framework, and generalize the classical renewal reward theorem \cite{Vlasiou2011} to apply to our setting, distinct from previous motor applications \cite{Krishnan2011,Hughes2011,Hughes2012,Hughes2013,Shtylla2015}. Next, we decompose the stochasticity in the system by averaging over the diffusion while conditioning on a realization of the jump process. This effectively turns the randomly switching SDE into a randomly switching ordinary differential equation (ODE), and thus a piecewise deterministic Markov process \cite{Davis1984}. Finally, we observe that for biologically reasonable parameter values, the relaxation rate of the continuous cargo dynamics is much faster than the jump rates for the discrete motor behavior. We then exploit this timescale separation to find explicit formulas for key motor transport statistics. 

The rest of the paper is organized as follows. We formulate the mathematical model in section~\ref{section setup}. In section~\ref{section renewal}, we generalize the renewal reward theorem to apply to our model. In section~\ref{section analysis}, we derive explicit formulas to evaluate motor transport. In section~\ref{section biology}, we use the model to make biological predictions. We conclude with a brief discussion and an Appendix that collects several proofs.

\section{Mathematical model}\label{section setup}

We model the motion of a single cargo driven by $M\ge1$ motors along a single microtubule.  These motors are permanently attached to the cargo, but they can bind to and unbind from the microtubule. At any time $t\ge0$, the state of our model is specified by
\begin{align*}
\big(X(t),\textbf{Z}(t),\J(t)\big)\in\R\times\R^{M}\times\{u,b\}^{M},
\end{align*}
where $X(t)\in\R$ is the location of the center of the cargo, $\textbf{Z}(t)=(Z_{i}(t))_{i=1}^{M}\in\R^{M}$ gives the locations of the centers of $M$ motors, and $\J(t)=(J_{i}(t))_{i=1}^{M}$ specifies if each motor is unbound or bound. 
Spatial locations are measured along the principal axis of the microtubule, which we identify with the real line. 

The cargo position evolves continuously in time, while the positions and states of motors change by discrete events, which correspond to binding to the microtubule, stepping along the microtubule, or unbinding from the microtubule. Specifically, in between these discrete motor events, $X(t)$ follows an Ornstein-Uhlenbeck (OU) process centered at the average bound motor position,
\begin{align}\label{ou}
\dd X(t)=\frac{k}{\gamma}\sum_{i\in I(t)}\big(Z_{i}(t)-X(t)\big) \dd t+\sqrt{2k_{B}T/\gamma}\,\dd W(t).
\end{align}
Here, $I(t)=\{i:J_{i}(t)=b\}\subseteq \{1,\dots,M\}$ gives the indices of motors that are bound at time $t\ge0$, and $\{W(t)\}_{t\ge0}$ is a standard Brownian motion. The SDE (\ref{ou}) stems from assuming a viscous (low Reynolds number) regime with drag coefficient $\gamma>0$, and that each bound motor exerts a Hookean force on the cargo with stiffness $k>0$. The Stokes-Einstein relation specifies the diffusion coefficient $k_{B}T/\gamma$, where $k_{B}$ is Boltzmann's constant and $T$ is the absolute temperature.

\begin{figure}[t]
\begin{centering}
\includegraphics[width=.4\textwidth]{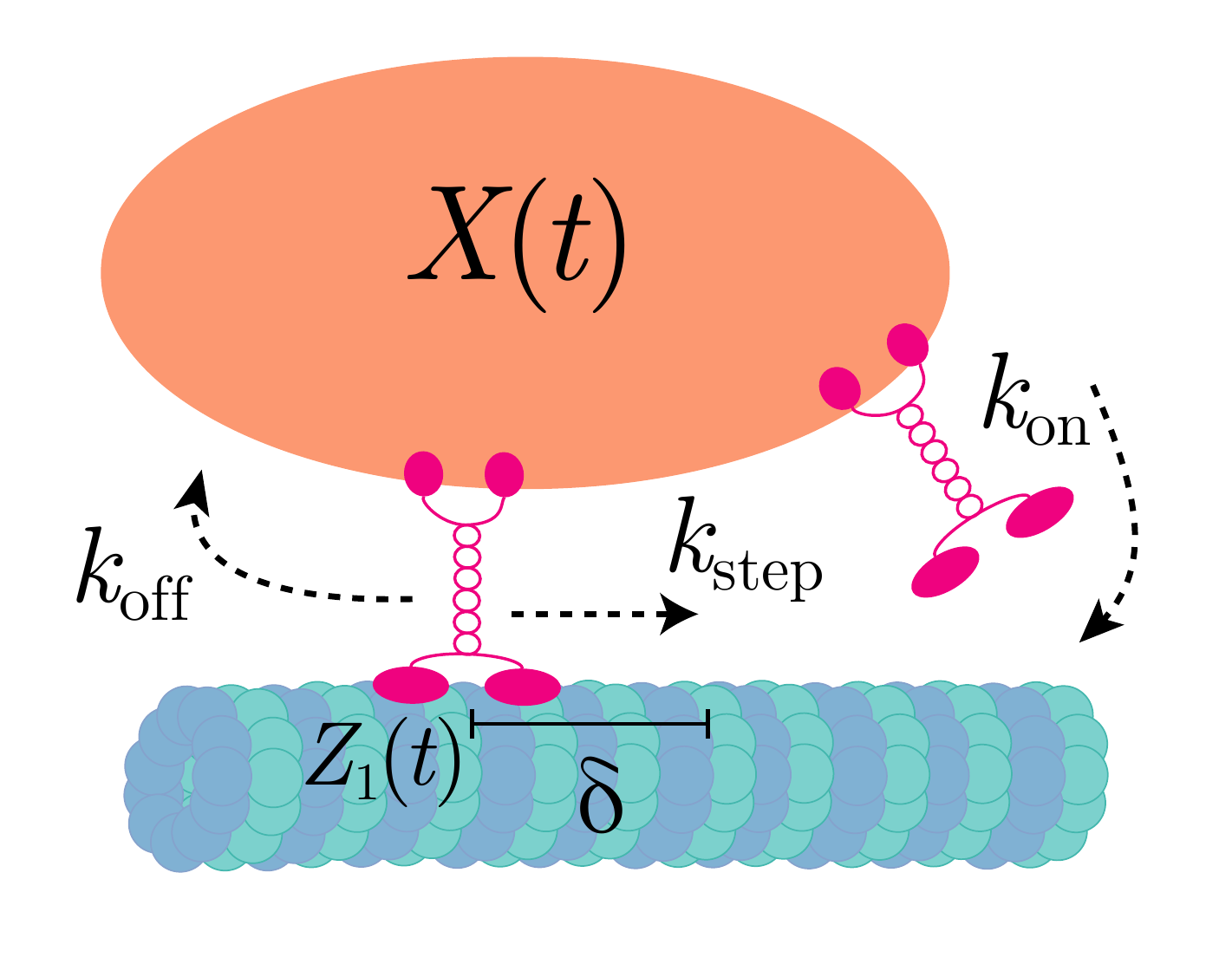}
\caption{\label{fig:fig1} 
Schematic describing the binding, unbinding, and stepping of motors. The positions of cargo and bound motors are $X(t)$ and $Z_i(t)$, respectively, both measured with respect to the principal axis of the microtubule. The state of the motor can switch between bound or unbound, and while bound, the motor can step, incrementing $Z_i(t)$ by displacement $\delta$. 
}
\end{centering}
\end{figure}

The discrete behavior of motors is as follows. Let $m(t)\in\{0,1,\dots,M\}$ denote the number of bound motors at time $t\ge0$,
\begin{align*}
m(t)=\sum_{i=1}^{M}1_{\{J_{i}(t)\neq u\}}\in\{0,1,\dots,M\},
\end{align*}
where $1_{\{A\}}$ denotes the indicator function on an event $A$. Each unbound motor binds to the microtubule at rate $\kon(m(t))>0$. Since unbound motors are tethered to the cargo, if an unbound motor binds at time $t\ge0$, then we assume that it binds to the track at $X(t)$ (motors can bind anywhere along the microtubule, not only binding sites). We could allow it to bind to a random position, but if the mean binding position is $X(t)$, then our results are unchanged. The position of each bound motor is fixed until it either steps or unbinds. Each bound motor unbinds at rate $\koff(m(t))>0$ and steps at rate $\kstep(m(t))>0$. When a motor steps, we add $\delta>0$ to its position, which is then fixed until it steps again or unbinds. This discrete motor behavior is summarized in Fig.~\ref{fig:fig1}. We emphasize that the motor binding, unbinding, and stepping rates are allowed to depend on the number of bound motors, $m(t)$, but are otherwise independent of $X(t)$.

\subsection{Nondimensionalization and assumptions}\label{section setup2}

We now give a dimensionless and more precise formulation of the model described above. First, we nondimensionalize the model by rescaling time by the rate $\koff(1)$ and space by the inverse length $\delta^{-1}$. Next, we note that unbound motors do not affect the cargo position. Hence, for convenience we can take $Z_{i}(t)=X(t)$ if the $i$-th motor is unbound, meaning that we can include unbound motors in the sum in (\ref{ou}) with zero contribution, and make the sum over \textit{all} motors. This yields the simplified dimensionless form
\begin{align}\label{dim}
\dd X(t)&=\eps^{-1}\sum_{i=1}^{M}\big(Z_{i}(t)-X(t)\big)\dd t+\sigma \dd W(t),
\end{align}
where
\begin{align*}
\eps:=\koff(1)\gamma/k,\qquad
\sigma:=\sqrt{2k_{B}T/(\delta^{2}\koff(1)\gamma)},
\end{align*}
and motors bind, unbind, and step at dimensionless rates
\begin{align}
\lon(m):=\frac{\kon(m)}{\koff(1)},\quad
\loff(m):=\frac{\koff(m)}{\koff(1)},\quad
\lstep(m):=\frac{\kstep(m)}{\koff(1)}. \label{eq:rates}
\end{align}

We find it convenient for our analysis to track the number of steps taken by each motor before unbinding, so let us expand the state space of $\J(t)$ so that its components $(J_{i}(t))_{i=1}^{M}$ each take values in $\{u,0,1,2,\dots\}$ with transition rates
\begin{align}\label{jumprates}
u\overset{\lon(m(t))}{\to}0,\qquad
j\overset{\lstep(m(t))}{\to}j+1,
\qquad
j\overset{\loff(m(t))}{\to}u,\qquad j\neq u.
\end{align}
The components of $\J(t)$ are conditionally independent given $m(t)$. At time $t\ge0$, the $i$-th motor is unbound if $J_{i}(t)=u$, bound if $J_{i}(t)\ge0$, and steps when $J_{i}(t)$ transitions from $j$ to $j+1$ for $j\ge0$. 
Under these assumptions, $m(t)$ is itself a Markov process on $\{0,1,\dots,M\}$ with transition rates
\begin{equation}
	0 \Markov{1}{M\lon(0)} 1 \Markov{2\loff(2)}{(M-1)\lon(1)} 2\Markov{}{}
	  \cdots \Markov{}{}M-2\Markov{(M-1)\loff(M-1)}{2\lon(M-2)} M-1 \Markov{M\loff(M)}{\lon(M-1)} M. \label{eq:nMarkov}
\end{equation}

For simplicity, we assume that the motors are initially unbound and that the cargo and motors start at the origin,
\begin{align*}
J_{i}(0)=u,\quad X(0)=Z_{i}(0)=0,\quad i\in\{1,\dots,M\}.
\end{align*}
The position of the $i$-th motor is then
\begin{align}\label{zp}
Z_{i}(t)=\big(X(\tau_{i}(t))+J_{i}(t)\big)1_{\{J_{i}(t)\neq u\}}+X(t)1_{\{J_{i}(t)=u\}},\quad i\in\{1,\dots,M\},
\end{align}
where $\tau_{i}(t)$ is the most recent binding time of the $i$-th motor,
\begin{align*}
\tau_{i}(t)=\sup\{s<t:J_{i}(s)=u\},\quad i\in\{1,\dots,M\}.
\end{align*}
We assume the Brownian motion $W=\{W(t)\}_{t\ge0}$ and the jump process $\J=\{\J(t)\}_{t\ge0}$ are independent. 

\section{Cargo position as a renewal reward process}\label{section renewal}

In order to analyze our model, we first show that $X(t)$ is a renewal reward process with partial rewards \cite{Vlasiou2011} and extend the classical renewal reward theorem to our case of partial rewards. This framework has an intuitive interpretation: the net displacement of cargo is determined by the displacement accrued at each epoch of being bound or unbound. However, there is a technical challenge. In the most classical setting, the reward-renewal theorem accrues rewards at the end of each epoch and boundedness of expectation of the rewards is sufficient to apply the reward-renewal theorem. In the case of partial rewards (which we have in our model), where rewards are accrued during an epoch, stronger conditions are required, which we prove are satisfied. 

First, define the sequence of times in which the cargo completely detaches from the microtubule (off) and subsequently reattaches to the microtubule (on),
\begin{align*}
0=\ton^{0}=\toff^{0}<\ton^{1}<\toff^{1}<\ton^{2}<\toff^{2}<\dots
\end{align*}
by
\begin{align}\label{tauonoff}
\begin{split}
\toff^{k}&:=\inf\{t>\ton^{k}:m(t)=0\},\quad k\ge1,\\
\ton^{k}&:=\inf\{t>\toff^{k-1}:m(t)\ge1\},\quad k\ge1.
\end{split}
\end{align}
Next, define the sequence of cargo displacements when the cargo is attached to the microtubule (on) and detached from the microtubule (off),
\begin{align}\label{RonRoff}
\Ron^{k}&:=X(\toff^{k})-X(\ton^{k}),
\quad
\Roff^{k}:=X(\ton^{k})-X(\toff^{k-1}),
\quad k\ge1,
\end{align}
and the corresponding times spent attached or detached,
\begin{align}\label{TonToff}
\Ton^{k}&:=\toff^{k}-\ton^{k},
\quad
\Toff^{k}:=\ton^{k}-\toff^{k-1},
\quad k\ge1.
\end{align}
It follows directly from the strong Markov property that $\{(\Toff^{k}+\Ton^{k},\Roff^{k}+\Ron^{k})\}_{k\ge1}$ is an independent and identically distributed (iid) sequence of random variables. 

In the language of renewal theory, $\{\Toff^{k}+\Ton^{k}\}_{k\ge1}$ are the interarrival times and $\{\Roff^{k}+\Ron^{k}\}_{k\ge1}$ are the corresponding rewards. Let $N(t)$ be the renewal process that counts the number of arrivals before time $t\ge0$,
\begin{align}\label{Ndefn}
N(t):=\sup\{k\ge0:\toff^{k}\le t\}.
\end{align}
Define the reward function, $R(t)$, and the partial reward function, $Y(t)$, by
\begin{align}\label{Y}
R(t):=\sum_{k=1}^{N(t)}(\Ron^{k}+\Roff^{k}),
\qquad
Y(t):=X(t)-X(\toff^{N(t)}),
\end{align}
and observe that
\begin{align*}
X(t)=R(t)+Y(t).
\end{align*}
In words, $R(t)$ describes rewards accrued during past epochs, and $Y(t)$ is the partial reward accrued during the current epoch. We show below that $\E[|\Ron+\Roff|]<\infty$ and $\E[\Ton+\Toff]<\infty$, and therefore the classical renewal reward theorem \cite{Vlasiou2011} ensures that
\begin{align}\label{rr}
\lim_{t\to\infty}\frac{R(t)}{t}
=\lim_{t\to\infty}\frac{\E[R(t)]}{t}
=\frac{\E[\Ron]+\E[\Roff]}{\E[\Ton]+\E[\Toff]}\quad\text{almost surely}.
\end{align}
The following theorem verifies that this convergence actually holds for $X(t)$.

\begin{theorem}\label{theorem renewal}
The following limit holds,
\begin{align}\label{Ve}
\lim_{t\to\infty}\frac{X(t)}{t}
=\lim_{t\to\infty}\frac{\E[X(t)]}{t}
=\frac{\E[\Ron]+\E[\Roff]}{\E[\Ton]+\E[\Toff]}\quad\text{almost surely}.
\end{align}
\end{theorem}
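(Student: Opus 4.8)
The plan is to build on the decomposition $X(t)=R(t)+Y(t)$ together with the classical renewal reward theorem~(\ref{rr}). Once its hypotheses $\E[|\Ron+\Roff|]<\infty$ and $\E[\Ton+\Toff]<\infty$ are verified, (\ref{rr}) gives $R(t)/t\to L$ almost surely and $\E[R(t)]/t\to L$, where $L:=(\E[\Ron]+\E[\Roff])/(\E[\Ton]+\E[\Toff])$; equation~(\ref{Ve}) then follows provided the partial reward is asymptotically negligible, namely $Y(t)/t\to0$ almost surely and $\E[Y(t)]/t\to0$. So two things must be proved: suitable moment bounds on a single renewal cycle, and the negligibility of $Y$.

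For the moment bounds, the starting point is that $m(t)$ is a continuous-time Markov chain on the finite set $\{0,\dots,M\}$ with the birth--death rates~(\ref{eq:nMarkov}); hence $\Toff^{k}$ is exponentially distributed (rate $M\lon(0)$) and $\Ton^{k}$ is a hitting time of $0$ from state $1$, so both have finite moments of every order, in fact finite exponential moments. During an off-epoch $m\equiv0$, so~(\ref{dim}) reduces to $\dd X=\sigma\,\dd W$; since $W$ is independent of the jump process, $\Roff^{k}=\sigma(W(\ton^{k})-W(\toff^{k-1}))$ is conditionally $N(0,\sigma^{2}\Toff^{k})$, whence $\E[|\Roff^{k}|]\le\sigma\sqrt{\E[\Toff^{k}]}<\infty$ and $\E[(\Roff^{k})^{2}]=\sigma^{2}\E[\Toff^{k}]<\infty$. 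The delicate quantity is the on-epoch behavior, which I would control through the full oscillation of $X$ over the $k$-th renewal cycle,
\[
\Xi_{k}:=\sup_{s,s'\in[\toff^{k-1},\,\toff^{k}]}|X(s)-X(s')|\ \ge\ \max\{|\Ron^{k}|,|\Roff^{k}|\}.
\]
The key estimate is a pathwise bound showing $\Xi_{k}$ grows at most polynomially in the cycle duration $\Ton^{k}+\Toff^{k}$: the cargo~(\ref{dim}) tracks the average position of the bound motors, which advances by at most $S_{k}$ (the total number of steps during the cycle) plus contributions from binding and unbinding events, while the deviation of $X$ from that average is controlled by the fast relaxation rate $\eps^{-1}$ and Doob's maximal inequality for the Brownian part. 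Since $S_{k}$ is stochastically dominated by a Poisson variable with mean proportional to $\Ton^{k}$, and $\Ton^{k}+\Toff^{k}$ has finite exponential moments, this yields $\E[\Xi_{k}]<\infty$ and $\E[\Xi_{k}^{2}]<\infty$; in particular the hypotheses of~(\ref{rr}) hold.

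To dispose of $Y(t)$, note that $t$ lies in the cycle $[\toff^{N(t)},\toff^{N(t)+1})$, so $|Y(t)|=|X(t)-X(\toff^{N(t)})|\le\Xi_{N(t)+1}$. The pairs $\{(\Ton^{k}+\Toff^{k},\Xi_{k})\}_{k\ge1}$ are iid by the strong Markov property and $\E[\Xi_{1}]<\infty$, so a Borel--Cantelli argument gives $\max_{k\le n}\Xi_{k}/n\to0$ almost surely; combined with the renewal law of large numbers $N(t)/t\to1/(\E[\Ton]+\E[\Toff])$ this yields
\[
\frac{|Y(t)|}{t}\ \le\ \frac{\max_{k\le N(t)+1}\Xi_{k}}{N(t)+1}\cdot\frac{N(t)+1}{t}\ \longrightarrow\ 0\quad\text{almost surely.}
\]
For the mean, a finite first moment does not suffice, since $\Xi_{N(t)+1}$ is length-biased (the inspection paradox); instead I would invoke the standard renewal identity $\E[\Xi_{N(t)+1}]\to\E[\Xi_{1}(\Ton^{1}+\Toff^{1})]/\E[\Ton^{1}+\Toff^{1}]$, whose right side is finite by Cauchy--Schwarz because $\E[\Xi_{1}^{2}]<\infty$ and $\E[(\Ton^{1}+\Toff^{1})^{2}]<\infty$. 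Hence $\E[|Y(t)|]$ is bounded uniformly in $t$, so $\E[Y(t)]/t\to0$, completing the proof.

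The step I expect to be the main obstacle is the pathwise oscillation bound on $\Xi_{k}$: one must show that over a randomly long epoch the interplay of discrete stepping---which can spread the bound motors apart, so that an unbinding event moves the cargo's target by an amount comparable to that spread---and the continuous tethered OU motion of $X$ cannot blow up faster than polynomially in the epoch length. The almost-sure conclusion needs only $\E[\Xi_{1}]<\infty$, but the $\E[X(t)]/t$ statement forces a genuine second moment for $\Xi_{k}$, which is precisely why the finiteness of the state space of $m$ (hence exponential moments of $\Ton^{k}$) and the fast relaxation $\eps^{-1}$ in~(\ref{dim}) are essential rather than merely convenient.
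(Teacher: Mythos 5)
Your proposal follows the paper's strategy in outline---decompose $X(t)=R(t)+Y(t)$, apply the classical renewal reward theorem to $R(t)$, and show the partial reward $Y(t)$ is asymptotically negligible---but there are two genuine differences in how you close the argument. For the almost-sure statement, the paper proves explicit sub-exponential tail bounds on the on/off pieces of the cycle oscillation (Lemmas~\ref{lemma off}--\ref{lemma on}, via Doob's maximal inequality, a Gronwall argument, and the quasi-stationary exponential tail of $\Ton^{k}$) and then runs Borel--Cantelli against the threshold $\sqrt{k}$; you instead work with the full cycle oscillation $\Xi_{k}$ and run Borel--Cantelli against a linear threshold $ck$, which correctly needs only $\E[\Xi_{1}]<\infty$---a slightly more economical route. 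The real point of departure is the $\E[X(t)]/t$ limit: the paper's proof asserts this ``follows immediately'' from $\E[Y_{k}]<\infty$ (Lemma~\ref{lemma finite}), but, as you observe, $|Y(t)|$ is controlled by the oscillation of the cycle straddling $t$, which is length-biased (the inspection paradox), so a first moment of $\Xi_{1}$ does not by itself bound $\E[\Xi_{N(t)+1}]$. Your appeal to the spread-distribution limit $\E[\Xi_{N(t)+1}]\to\E[\Xi_{1}(\Ton^{1}+\Toff^{1})]/\E[\Ton^{1}+\Toff^{1}]$, finite by Cauchy--Schwarz given second moments, is the clean way to close this gap; the paper's Lemmas~\ref{lemma off}--\ref{lemma on} do in fact give tails strong enough to supply $\E[\Xi_{1}^{2}]<\infty$, so the statement is safe, but your write-up makes the required mechanism explicit. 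Finally, the step you flag as the main obstacle---the pathwise oscillation bound delivering these moments---is indeed where the bulk of the paper's work lives; your sketch (step count stochastically dominated by a Poisson with mean proportional to $\Ton^{k}$, exponential moments of $\Ton^{k}$ from the quasi-stationary distribution of the finite birth--death chain $m(t)$, OU fluctuation control via Doob) matches the essential ingredients of the paper's Lemma~\ref{lemma on}, though the paper executes the coupling between the drift term and the running supremum carefully via a Gronwall inequality on $\sup_{s\le t}|X(s)|$, which is not entirely routine and is worth writing out if you want a complete proof.
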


To prove this theorem, we need several lemmas. We collect the proofs of these lemmas in {Appendix~\ref{section lemma proofs}}. The first lemma bounds the probability that the partial reward function $Y(t)$ in (\ref{Y}) is large when the cargo is detached from the microtubule. 

\begin{lemma}\label{lemma off}
Define the sequence of iid random variables $\{\Yoff^{k}\}_{k\ge1}$ by
\begin{align*}
\Yoff^{k}&:=\sup_{t\in[\toff^{k-1},\ton^{k}]}\big|X(t)-X(\toff^{k})\big|,\quad k\ge1.
\end{align*}
Then for any $C>0$ and $k\ge1$, we have that
\begin{align}\label{bbb}
 \P(\Yoff^{k}\ge C) 
&\le \sqrt{\pi/x}(2x+1)e^{-x},\quad \text{where }x=\tfrac{C}{\sigma}\sqrt{2M\lon(0) }>0.
\end{align}
\end{lemma}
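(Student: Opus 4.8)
The plan is to exploit that during an ``off'' epoch the cargo performs a driftless Brownian motion, and that the length of such an epoch is \emph{exactly} exponentially distributed. On $[\toff^{k-1},\ton^{k}]$ every motor is unbound, so $m(t)=0$ and $Z_{i}(t)=X(t)$ for all $i$; hence the drift in (\ref{dim}) vanishes and $X(t)-X(\toff^{k-1})=\sigma(W(t)-W(\toff^{k-1}))$ on this interval. By the Markov structure of $m(\cdot)$ --- in (\ref{eq:nMarkov}) the state $m=0$ leaves only to $m=1$, at rate $M\lon(0)$ --- the length $\Toff^{k}=\ton^{k}-\toff^{k-1}$ has an $\mathrm{Exp}(M\lon(0))$ distribution, and since $\Toff^{k}$ is a functional of $\J$ while $W\perp\J$, it is independent of the Brownian increments above. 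Writing $\lambda:=M\lon(0)$ and taking the reference point to be $X(\toff^{k-1})$ (the value $X(\toff^{N(t)})$ subtracted in $Y(t)$ for $t$ in this epoch, since $N(t)=k-1$ there), we obtain the distributional identity
\begin{align*}
\Yoff^{k}\ \overset{d}{=}\ \sigma\sup_{0\le s\le T}|B(s)|,\qquad B\text{ a standard Brownian motion},\ T\sim\mathrm{Exp}(\lambda),\ B\perp T.
\end{align*}

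It then remains to estimate $\P\big(\sup_{0\le s\le T}|B(s)|\ge a\big)$ with $a:=C/\sigma$. Conditioning on $T=t$, the reflection principle together with the Gaussian tail bound $\P(B(t)\ge a)\le\tfrac{1}{2}e^{-a^{2}/(2t)}$ gives $\P(\sup_{0\le s\le t}|B(s)|\ge a)\le 4\,\P(B(t)\ge a)\le 2e^{-a^{2}/(2t)}$, and integrating against the density of $T$ yields
\begin{align*}
\P\Big(\sup_{0\le s\le T}|B(s)|\ge a\Big)\ \le\ 2\lambda\int_{0}^{\infty}e^{-\lambda t-a^{2}/(2t)}\,\dd t .
\end{align*}
The substitution $t=(a/\sqrt{2\lambda})\,u$ rewrites the right-hand side as $x\int_{0}^{\infty}e^{-\frac{x}{2}(u+u^{-1})}\,\dd u$ with $x=a\sqrt{2\lambda}=\tfrac{C}{\sigma}\sqrt{2M\lon(0)}$; splitting this integral at the minimizer $u=1$, using $u+u^{-1}=2+(u-1)^{2}/u$, and bounding $(u-1)^{2}/u$ below by $(u-1)^{2}/2$ near $u=1$ and by a multiple of $u$ for large $u$, produces an explicit bound of the advertised shape $\sqrt{\pi/x}\,(2x+1)e^{-x}$. (Equivalently, the integral equals $2x K_{1}(x)$ and one invokes a standard upper estimate for the modified Bessel function $K_{1}$.)

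The only real work is this last step --- converting the fixed-horizon tail bound into one for the exponential horizon and then controlling $\int_{0}^{\infty}e^{-\lambda t-a^{2}/(2t)}\,\dd t$ sharply enough to reach the stated closed form; everything else is bookkeeping. A slicker route avoids the integral entirely: with $\tau_{a}:=\inf\{s\ge 0:|B(s)|=a\}$ the exit time of $B$ from $(-a,a)$, one has $\{\sup_{0\le s\le T}|B(s)|\ge a\}=\{\tau_{a}\le T\}$, and conditioning on $\tau_{a}$ gives $\P(\tau_{a}\le T)=\E[e^{-\lambda\tau_{a}}]=\operatorname{sech}\!\big(a\sqrt{2\lambda}\big)\le 2e^{-x}$, which is in fact sharper than (\ref{bbb}); here the only inputs are the Laplace transform of the exit time of Brownian motion from a symmetric interval and the independence of $W$ and $\J$.
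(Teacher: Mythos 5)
Your proposal is correct. Your first route is essentially the paper's: after reducing the off-epoch to a driftless Brownian motion $\sigma B$ run for an independent $\mathrm{Exp}(M\lon(0))$ time, you bound the running supremum by $2e^{-a^2/(2t)}$ (you via the reflection principle plus a Gaussian tail, the paper via Doob's maximal inequality, with the same constant), integrate against the exponential density, and arrive at $2\lambda\int_0^\infty e^{-\lambda t - a^2/(2t)}\,\dd t = 2xK_1(x)$; your direct-estimate sketch of that integral (split at $u=1$, factor out $e^{-x}$) is plausible but not carried to the advertised constants, and your parenthetical Bessel route, invoking an upper bound on $K_1$, is precisely what the paper does with \cite{Yang2017}. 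The alternative you call ``slicker,'' however, is a genuinely different and sharper argument: identifying $\{\sup_{[0,T]}|B|\ge a\}=\{\tau_a\le T\}$ and using independence of $T$ from $B$ gives $\P(\Yoff^k\ge C)=\E[e^{-\lambda\tau_a}]=\operatorname{sech}(a\sqrt{2\lambda})=\operatorname{sech}(x)\le 2e^{-x}$, with equality at the first step and no Bessel functions, and $2e^{-x}$ is strictly smaller than $\sqrt{\pi/x}(2x+1)e^{-x}$ for every $x>0$. That buys a shorter proof and a cleaner, tighter tail bound at the modest cost of invoking the known Laplace transform of the exit time of Brownian motion from a symmetric interval. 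You are also right to read the reference point in $\Yoff^k$ as $X(\toff^{k-1})$: this is what $Y(t)=X(t)-X(\toff^{N(t)})$ dictates for $t$ in the $k$-th off epoch, and it is what both your argument and the paper's implicitly use.
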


Similarly, the next lemma bounds the probability that the partial reward function is large when the cargo is attached to the microtubule.

\begin{lemma}\label{lemma on}
Define the sequence of iid random variables $\{\Yon^{k}\}_{k\ge1}$ by
\begin{align*}
\Yon^{k}&:=\sup_{t\in[\ton^{k},\toff^{k}]}|X(t)-X(\ton^{k})|,\quad k\ge1.
\end{align*}
\textcolor{black}{There exists $\lambda_{0}>0,\lambda_{1}>0$ so that if $k\ge1$, then
$ \P(\Yon^{k}\ge C) 
\le \lambda_{0}\sqrt{C}e^{-\lambda_{1}\sqrt{C}}$ for all sufficiently large $C>0$.
}
\end{lemma}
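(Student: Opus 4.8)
The plan is to condition on the jump process $\J$ and exploit the Gaussianity and contractivity that remain. By the strong Markov property at $\ton^{k}$ and translation invariance, it suffices to bound $\P(\sup_{t\in[0,T]}|X(t)|\ge C)$ in the case $X(0)=0$ with exactly one motor bound at $0$, the rest unbound, and $T=\Ton$ the first return time of $m(\cdot)$ to $0$. Conditioned on the entire path of $\J$, the vector $(X(t),\textbf{Z}(t))$ solves an affine SDE driven by $W$: between motor events it follows the linear flow (\ref{dim}); a step adds a deterministic increment; an unbinding deletes a coordinate; and a binding resets a coordinate $Z_{i}$ to the current value of $X$, which is linear. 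Hence $X\mid\J$ is Gaussian, and I would decompose $X=\mu_{X}+\eta_{X}$ with $\mu_{X}:=\E[X\mid\J]$ deterministic and $\eta_{X}$ centered Gaussian. For the mean, under $W\equiv0$ everything starts at $0$ and the only inputs are the stepping increments (each equal to $1$ in nondimensional units, at most $N$ of them, where $N$ counts all motor events in $[0,T]$), while $\mu_{X}$ relaxes without overshoot toward convex combinations of the $\mu_{Z_{i}}$; hence $0\le\mu_{X}(t)\le N$, so $\sup_{[0,T]}|\mu_{X}|\le N$.

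The crux is a pathwise bound for the fluctuation $\eta_{X}$. The centered system $(\eta_{X},(\eta_{Z_{i}}))$ satisfies: between events $\eta_{X}$ is an Ornstein--Uhlenbeck process with relaxation rate $\eps^{-1}m\ge\eps^{-1}$ centered at the convex combination $m^{-1}\sum_{i\text{ bound}}\eta_{Z_{i}}$, the $\eta_{Z_{i}}$ are frozen, a step changes nothing, a binding sets $\eta_{Z_{i}}\leftarrow\eta_{X}$, an unbinding deletes a coordinate. Crucially, $\eta_{X}$ itself never jumps, and its centering always lies in the convex hull of the $\eta_{Z_{i}}$, which are themselves past values of $\eta_{X}$. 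Writing $0=s_{0}<s_{1}<\dots<s_{N}=T$ for the event times, $a_{l}:=\eps^{-1}m(s_{l})\in[\eps^{-1},\eps^{-1}M]$, $G_{l}(t):=\sigma\int_{s_{l}}^{t}e^{-a_{l}(t-v)}\,\dd W(v)$ and $\xi_{l}:=\sup_{t\in[s_{l},s_{l+1})}|G_{l}(t)|$, a short induction on $l$ --- using that on $[s_{l},s_{l+1})$ the process $\eta_{X}$ equals a convex combination of $\{\eta_{X}(s_{l})\}\cup\{\eta_{Z_{i}}(s_{l})\}$ plus $G_{l}(t)$, and that $\max(|\eta_{X}|,\max_{i}|\eta_{Z_{i}}|)$ never increases at an event --- yields $\sup_{[0,T]}|\eta_{X}|\le\sum_{l}\xi_{l}$. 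Combining with the mean bound, $\Yon^{k}=\sup_{t\in[0,T]}|X(t)|\le N+\sum_{l}\xi_{l}$.

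It then remains to bound the tails of $N+\sum_{l}\xi_{l}$. The epoch length $T=\Ton$ has an exponential tail, since $m(\cdot)$ is a finite birth--death chain with $0$ reachable from every state, so $\P(T>t)\le c_{1}e^{-c_{2}t}$; and $N$ is stochastically dominated by a Poisson$(\Lambda T)$ variable with $\Lambda$ an upper bound on the total event rate, so combining with the tail of $T$ gives $\P(N\ge n)\le c_{3}e^{-c_{4}n}$. Conditioned on $\J$, the $\xi_{l}$ are independent (built from increments of $W$ on the disjoint, $\J$-measurable intervals $[s_{l},s_{l+1})$), $\mathrm{Var}(G_{l}(t)\mid\J)\le\sigma^{2}/(2a_{l})\le\sigma^{2}\eps/2$, so by the Borell--TIS inequality each $\xi_{l}-\E[\xi_{l}\mid\J]$ is sub-Gaussian with variance parameter $O(\sigma^{2}\eps)$, while the standard estimate for the running maximum of an OU process gives $\E[\xi_{l}\mid\J]\le c_{5}\sigma\sqrt{\eps}\,(1+\sqrt{\log(1+\eps^{-1}MT)})$. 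I would split $\{\Yon^{k}\ge C\}$ into $\{N\ge C/2\}$, $\{N>\lceil\sqrt C\rceil\}\cup\{T>\lceil\sqrt C\rceil\}$, and the complementary ``good'' event; the first contributes $\le c_{3}e^{-c_{4}C/2}$ and the second $\le(c_{1}+c_{3})e^{-c_{6}\sqrt C}$, while on the good event $\E[\sum_{l}\xi_{l}\mid\J]=O(\sqrt C\,\sqrt{\log C})=o(C)$, so sub-Gaussian concentration of a sum of at most $\lceil\sqrt C\rceil$ independent terms makes $\P(\sum_{l}\xi_{l}\ge C/2\mid\J)\le e^{-\Omega(C^{3/2})}$. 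Adding, $\P(\Yon^{k}\ge C)\le\lambda_{0}\sqrt C\,e^{-\lambda_{1}\sqrt C}$ for suitable $\lambda_{0},\lambda_{1}>0$ and all large $C$. (The $\sqrt C$ rate is forced only by the truncation: since $\E[\xi_{l}]$ grows with the interval length, the truncation level cannot be kept constant in $C$; a finer truncation improves the rate, but $\sqrt C$ already suffices.)

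The main obstacle is the pathwise bound $\sup_{[0,T]}|\eta_{X}|\le\sum_{l}\xi_{l}$. A naive Gr\"onwall estimate applied directly to $X$ produces a useless factor $e^{O(T)}$, and it leaves a circular dependence --- the centering of $X$ involves the motor positions, which in turn record where $X$ has already wandered. Conditioning on $\J$ and passing to the centered process $\eta_{X}$ removes both difficulties: the centering of $\eta_{X}$ is always a convex combination of past cargo fluctuations, so $\eta_{X}$ is contractive between events and grows only through the local Brownian term $G_{l}$ on each inter-event interval. A secondary technical point is obtaining the OU running-maximum estimate for $\xi_{l}$ uniformly over the (possibly long) interval length and over relaxation rates $a_{l}\in[\eps^{-1},\eps^{-1}M]$.
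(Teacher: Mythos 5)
Your proof is correct, but it takes a genuinely different route from the paper's. The paper works with the explicit OU solution on $[\ton^{k},\toff^{k}]$: it bounds the stochastic integral by $2\sup_{[0,t]}|W|$ via integration by parts, bounds the centering $|\mu(t)|$ by a Poisson step count $P(t)$ plus the running cargo maximum, and closes the inequality with Gr\"onwall to get $X_{\sup}(t)\le\bigl(2\sigma\sup_{[0,t]}|W|+P(t)\bigr)(1+\Theta t)$ with $\Theta=\eps^{-1}M$; this pathwise bound is then integrated against the exponential tail of $\Ton$ (furnished by the quasi-stationary distribution of the birth--death chain $m$), using Doob's maximal inequality and a Poisson Chernoff bound on the two factors. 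You instead condition on $\J$, split $X=\mu_{X}+\eta_{X}$ into conditional mean and centered Gaussian fluctuation, and replace Gr\"onwall with a convexity induction: between events the centered process is a convex combination of past values of $\eta_{X}$ (recorded in the $\eta_{Z_{i}}$) plus a local OU fluctuation $G_{l}$, and $\max(|\eta_{X}|,\max_{i}|\eta_{Z_{i}}|)$ cannot increase at a step (deterministic increment leaves the centered process unchanged), a binding (sets $\eta_{Z_{i}}\leftarrow\eta_{X}$), or an unbinding (drops a coordinate), which yields the cleaner pathwise bound $\sup|\eta_{X}|\le\sum_{l}\xi_{l}$; the analogous convexity gives $|\mu_{X}|\le N$. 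You then conclude with Borell--TIS and sub-Gaussian concentration of the at-most-$O(\sqrt C)$ conditionally independent $\xi_{l}$, rather than the paper's Bessel-function integral. Both routes yield a bound of the form $\sqrt C\,e^{-\lambda\sqrt C}$, and each buys something: the paper's argument is more elementary (a single Doob inequality, Gr\"onwall, a Bessel identity, no Gaussian-process machinery), whereas your mean/fluctuation split avoids the $(1+\Theta t)\sim\eps^{-1}t$ amplification of the diffusive term, gives the fluctuation contribution the correct $O(\sigma\sqrt\eps)$ scaling, and is closer in spirit to the averaging and timescale-separation viewpoint the paper develops later in section~4.
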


The next lemma uses Lemmas~\ref{lemma off} and \ref{lemma on} to prove that the partial reward function gets large only finitely many times.

\begin{lemma}\label{lemma bc}
Define the sequence of iid random variables $\{Y_{k}\}_{k\ge1}$ by
\begin{align}\label{Ydef}
Y_{k}:=\sup_{t\in[\toff^{k-1},\toff^{k}]}\big|X(t)-X(\toff^{k})\big|,\quad k\ge1.
\end{align}
Then
\begin{align*}
\P\Big(\lim_{K\to\infty}\bigcup_{k=K}^{\infty}\big\{Y_{k}>\sqrt{k}\big\}\Big)=0.
\end{align*}
\end{lemma}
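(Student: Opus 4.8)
The plan is to prove this by the first Borel--Cantelli lemma. Since the events $B_K:=\bigcup_{k=K}^{\infty}\{Y_k>\sqrt k\}$ are decreasing in $K$, the set in the statement equals $\bigcap_{K\ge1}B_K=\limsup_{k\to\infty}\{Y_k>\sqrt k\}$, so it suffices to show $\sum_{k\ge1}\P(Y_k>\sqrt k)<\infty$. To bound $\P(Y_k>\sqrt k)$, I would first control $Y_k$ by the quantities appearing in Lemmas~\ref{lemma off} and \ref{lemma on}. Splitting the interval $[\toff^{k-1},\toff^{k}]$ at the reattachment time $\ton^{k}$: for $t\in[\toff^{k-1},\ton^{k}]$ we have $|X(t)-X(\toff^{k})|\le\Yoff^{k}$ directly by definition, while for $t\in[\ton^{k},\toff^{k}]$ the triangle inequality gives $|X(t)-X(\toff^{k})|\le|X(t)-X(\ton^{k})|+|X(\ton^{k})-X(\toff^{k})|\le2\Yon^{k}$, using $\toff^{k}\in[\ton^{k},\toff^{k}]$ for the second term. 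Hence $Y_k\le\Yoff^{k}+2\Yon^{k}$, and therefore $\P(Y_k>\sqrt k)\le\P\big(\Yoff^{k}>\tfrac{1}{2}\sqrt k\big)+\P\big(\Yon^{k}>\tfrac{1}{4}\sqrt k\big)$.

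Next I would feed in the tail estimates from the two preceding lemmas. Applying Lemma~\ref{lemma off} with $C=\tfrac{1}{2}\sqrt k$ gives $x=\tfrac{1}{2\sigma}\sqrt{2M\lon(0)}\,\sqrt k=:c_0\sqrt k$ with $c_0>0$ fixed, so $\P\big(\Yoff^{k}\ge\tfrac{1}{2}\sqrt k\big)\le\sqrt{\pi/(c_0\sqrt k)}\,(2c_0\sqrt k+1)\,e^{-c_0\sqrt k}$, which is at most a polynomial in $k$ times $e^{-c_0\sqrt k}$. Applying Lemma~\ref{lemma on} with $C=\tfrac{1}{4}\sqrt k$ --- legitimate once $k$ is large enough that $\tfrac{1}{4}\sqrt k$ passes the threshold in that lemma --- gives $\P\big(\Yon^{k}\ge\tfrac{1}{4}\sqrt k\big)\le\lambda_{0}(\tfrac{1}{4}\sqrt k)^{1/2}e^{-\lambda_{1}(\tfrac{1}{4}\sqrt k)^{1/2}}$, at most a polynomial in $k$ times $e^{-(\lambda_{1}/2)k^{1/4}}$. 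Both bounds decay like a stretched exponential, hence are summable in $k$; the finitely many small-$k$ terms where the Lemma~\ref{lemma on} bound does not yet apply each contribute at most $1$ and do not affect convergence. Therefore $\sum_{k\ge1}\P(Y_k>\sqrt k)<\infty$, and Borel--Cantelli yields the claim.

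I do not expect a serious obstacle inside this lemma: the argument is a routine Borel--Cantelli estimate once Lemmas~\ref{lemma off} and \ref{lemma on} are in hand, and the real work lies upstream in establishing those stretched-exponential tail bounds. The only points needing mild care are choosing the decomposition $Y_k\le\Yoff^{k}+2\Yon^{k}$ correctly --- the factor $2$ is forced because $\Yoff^{k}$ and $\Yon^{k}$ are anchored at $\toff^{k}$ and $\ton^{k}$ respectively, so converting a deviation measured from $X(\ton^{k})$ into one measured from $X(\toff^{k})$ costs an extra copy of $\Yon^{k}$ --- and checking that $\tfrac{1}{4}\sqrt k$ eventually exceeds the threshold $C$ in Lemma~\ref{lemma on}.
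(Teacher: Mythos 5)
Your proof is correct and follows essentially the same route as the paper's: bound $Y_k$ by $\Yoff^{k}$ plus a multiple of $\Yon^{k}$ via the triangle inequality, feed in the stretched-exponential tails from Lemmas~\ref{lemma off} and \ref{lemma on}, check summability, and invoke Borel--Cantelli. The only cosmetic difference is that the paper bounds $|X(\ton^{k})-X(\toff^{k})|$ by $\Yoff^{k}$ (since $\ton^{k}$ is the right endpoint of the interval defining $\Yoff^{k}$), giving the slightly tighter $Y_k\le\Yoff^{k}+\Yon^{k}$, whereas you bound it by $\Yon^{k}$ to get $Y_k\le\Yoff^{k}+2\Yon^{k}$; both yield a summable series, and you also explicitly handle the finite number of small-$k$ terms below the threshold in Lemma~\ref{lemma on}, a point the paper leaves implicit.
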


The last lemma checks that the mean of $Y_{k}$ in (\ref{Ydef}) is finite.

\begin{lemma}\label{lemma finite}
Define $\{Y_{k}\}_{k\ge1}$ as in (\ref{Ydef}). Then $\E[Y_{k}]<\infty$ for all $k\ge1$.
\end{lemma}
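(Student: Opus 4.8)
The plan is to bound $Y_k$ by the sum of the ``on'' and ``off'' partial-reward suprema appearing in Lemmas~\ref{lemma off} and \ref{lemma on}, and then conclude finiteness of the mean from the tail bounds already established. Concretely, for $t\in[\toff^{k-1},\toff^{k}]$ we have $|X(t)-X(\toff^{k})|\le |X(t)-X(\ton^{k})|+|X(\ton^{k})-X(\toff^{k})|$; depending on whether $t$ lies in the detached subinterval $[\toff^{k-1},\ton^{k}]$ or the attached subinterval $[\ton^{k},\toff^{k}]$, the first term is controlled by $\Yoff^{k}$ (after accounting for $|X(\ton^k)-X(\toff^k)|\le\Yon^k$) or directly by $\Yon^{k}$. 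Thus there is a deterministic constant (something like $2$ or $3$) with $Y_k\le c(\Yon^k+\Yoff^k)$ pointwise. It then suffices to show $\E[\Yon^k]<\infty$ and $\E[\Yoff^k]<\infty$.

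For the second, Lemma~\ref{lemma off} gives $\P(\Yoff^k\ge C)\le \sqrt{\pi/x}\,(2x+1)e^{-x}$ with $x = \tfrac{C}{\sigma}\sqrt{2M\lon(0)}$, which is summable/integrable: $\E[\Yoff^k]=\int_0^\infty \P(\Yoff^k\ge C)\,\dd C<\infty$ since the integrand decays like a constant times $\sqrt{C}\,e^{-a\sqrt{C}}$ for $C$ large. For the first, Lemma~\ref{lemma on} gives $\P(\Yon^k\ge C)\le \lambda_0\sqrt{C}\,e^{-\lambda_1\sqrt{C}}$ for all sufficiently large $C$; since $\P(\Yon^k\ge C)\le 1$ always, the tail integral $\int_0^\infty \P(\Yon^k\ge C)\,\dd C$ splits into a bounded piece on $[0,C_0]$ and a convergent piece $\int_{C_0}^\infty \lambda_0\sqrt{C}\,e^{-\lambda_1\sqrt{C}}\,\dd C<\infty$ (substitute $C=v^2$ to see this is $\int 2\lambda_0 v^2 e^{-\lambda_1 v}\,\dd v<\infty$). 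Hence $\E[\Yon^k]<\infty$, and combining, $\E[Y_k]\le c(\E[\Yon^k]+\E[\Yoff^k])<\infty$.

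I do not expect any serious obstacle here; the lemma is essentially a bookkeeping consequence of the two tail bounds. The only mild care needed is the deterministic decomposition of $Y_k$: one must verify that the supremum over the full epoch $[\toff^{k-1},\toff^{k}]$ is genuinely dominated by a fixed multiple of $\Yon^k+\Yoff^k$, using the triangle inequality to relate the reference point $X(\toff^k)$ to the reference points $X(\ton^k)$ and $X(\toff^{k-1})$ used in the two sub-suprema, and noting $|X(\toff^{k-1})-X(\toff^k)|\le \Yoff^k+\Yon^k$ as well. Once that pointwise bound is in hand, the moment estimate is the routine ``integrate the tail'' computation sketched above, and iid-ness of $\{Y_k\}$ (already asserted, following from the strong Markov property) means it is enough to treat a single $k$.
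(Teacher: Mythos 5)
Your proposal is correct and follows essentially the same route as the paper: bound $Y_k$ pointwise by a fixed multiple of $\Yoff^k+\Yon^k$ (as in the proof of Lemma~\ref{lemma bc}) and then integrate the tail bounds from Lemmas~\ref{lemma off} and \ref{lemma on} to get a finite mean. The only difference is cosmetic — you integrate $\P(\Yoff^k\ge C)$ and $\P(\Yon^k\ge C)$ separately and add, while the paper integrates $\P(Y_k>C)$ directly; your extra care about the reference-point bookkeeping in the triangle-inequality decomposition is sensible but just affects an unimportant constant.
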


With these lemmas in place, we are ready to prove Theorem~\ref{theorem renewal}.

\begin{proof}[Proof of Theorem~\ref{theorem renewal}]

It follows immediately from Lemma~\ref{lemma finite} that $\E[|\Ron^{k}+\Roff^{k}|]<\infty$. Furthermore, $\Toff^{k}$ is exponentially distributed with rate $M\lon(0)$, and the proof of Lemma~\ref{lemma on} shows that $\E[\Ton^{k}]<\E[S]$ for an exponentially distributed random variable $S$ with some rate $\lambda>0$. Hence, $\E[\Ton^{k}+\Toff^{k}]<\infty$, and thus (\ref{rr}) holds by a direct application of the classical renewal reward theorem \cite{Vlasiou2011}.

Therefore, it remains to check that
\begin{align}\label{double}
\lim_{t\to\infty}\frac{\E\big[X(t)-X(\toff^{N(t)})\big]}{t}
=0=\lim_{t\to\infty}\frac{X(t)-X(\toff^{N(t)})}{t},\quad\text{almost surely}.
\end{align}
The first equality in (\ref{double}) follows immediately from Lemma~\ref{lemma finite}.

To verify the second equality in (\ref{double}), we note that Lemma~\ref{lemma bc} ensures that
\begin{align*}
\limsup_{k\to\infty}\frac{Y_{k}}{\sqrt{k}}\le1,\quad\text{almost surely}.
\end{align*}
Therefore, 
\begin{align*}
\lim_{t\to\infty}\frac{|X(t)-X(\toff^{N(t)})|}{t}
\le\lim_{t\to\infty}\frac{|Y_{N(t)}|}{t}
\le\lim_{t\to\infty}\frac{\sqrt{N(t)}}{t}=0,
\quad\text{almost surely},
\end{align*}
since
\begin{align*}
\lim_{t\to\infty}\frac{N(t)}{t}=\frac{1}{\E[\Ton+\Toff]},
\quad\text{almost surely},
\end{align*}
by the strong law of large numbers for renewal processes \cite{Vlasiou2011}. 
\end{proof}
Consequently, the position of the cargo does indeed satisfy a classical reward-renewal structure with two different types of epochs: bound and unbound, each of which accrue some net displacement. 

\section{Mathematical analysis of transport ability}\label{section analysis}

With the framework of renewal theory constructed in section~\ref{section renewal}, we are ready to analyze the transport ability of the model introduced in section~\ref{section setup}. To assess the transport ability of the motor cargo ensemble, we analyze the expected \emph{run length}, expected \emph{run time}, and asymptotic \emph{velocity}. We define the run length to be the distance traveled by the cargo between the first time a motor attaches to the cargo until the next time that all motors are detached from the microtubule, which was defined precisely in (\ref{RonRoff}) and denoted by $\Ron$. The run time is the corresponding time spent attached to a microtubule, which was defined precisely in (\ref{TonToff}) and denoted by $\Ton$. The asymptotic velocity is
\begin{equation}
 	V := \lim_{t\to\infty}\frac{X(t)}{t} \label{eq:V}.
 \end{equation} 
 The velocity $V$ includes both the time the cargo is being transported along the microtubule and diffusing while unattached.
 
Applying Theorem~\ref{theorem renewal}, we have that
\begin{align}\label{Vr}
V=\frac{\E[\Ron]+\E[\Roff]}{\E[\Ton]+\E[\Toff]}\quad\text{almost surely}.
\end{align}
Now, 
\begin{align}\label{Roff}
\E[\Roff]=0,
\end{align}
since the cargo is freely diffusing when no motors are bound, and since motor binding and unbinding is independent of Brownian motion $\{W(t)\}_{t\ge0}$. Furthermore, when all of the $M$ motors are unbound, each motor binds at rate $\lon(0)$. Hence,
\begin{align}\label{Toff}
\E[\Toff]=(M\lon(0))^{-1}.
\end{align}
It therefore remains to calculate two of the three quantities, $V$, $\E[\Ron]$, and $\E[\Ton]$, since the third is given by (\ref{Vr}). We calculate $\E[\Ton]$ first since it is the simplest, as it is a mean first passage time of a continuous-time Markov chain.

\subsection{Expected run time}\label{section run time}

As we noted in section~\ref{section setup2}, the number of motors bound $m(t)$ is itself the Markov process (\ref{eq:nMarkov}). 
To compute the expected run time, we compute the mean time for $m(t)$ to reach state $m=0$ starting from $m(0)=1$.

Let $\widetilde{Q}\in\R^{(M+1)\times(M+1)}$ be the generator of the Markov chain $m(t)$ in (\ref{eq:nMarkov}). That is, the $(i,j)$-entry of $\widetilde{Q}$ gives the rate that $m(t)$ jumps from state $i$ to state $j\neq i$, and the diagonal entries are chosen so that $\widetilde{Q}$ has zero row sums. Let $Q\in\R^{M\times M}$ be the matrix obtained from deleting the first row and column of $\widetilde{Q}$. The matrix $Q$ is tridiagonal, with the $m$-th row containing  subdiagonal, diagonal, and superdiagonal entries $m\loff(m)$, $-(m\loff(m)+(M-m)\lon(m))$, $(M-m)\lon(m)$, respectively. The expected run time $\E[\Ton]$ is (by Theorem 3.3.3 in \cite{Norris1998}),
\begin{equation}
\E[\Ton] = \mathbf{1}^T \mathbf{t}, \qquad \text{where} \quad  Q^T \mathbf{t} = -\mathbf{e}_1\label{Ton},
\end{equation}
where $\textbf{1}\in\R^{M}$ is the vector of all $1$'s and $\mathbf{e}_1 \in \R^M$ is the standard basis vector.

\subsection{Decomposing stochasticity}

Having calculated $\E[\Ton]$ in (\ref{Ton}), we can determine $V$ by determining $\E[\Ron]$ (or vice versa). Two key steps allow us to analyze $V$ and $\E[\Ron]$: (i) we average over the diffusive dynamics while conditioning on a realization of the jump dynamics, and (ii) we take advantage of a timescale separation between the relaxation rate of the cargo dynamics and the jump rate of the motor dynamics.

\subsubsection{Conditioning on jump realizations} 
Observe that the stochasticity in the model can be separated into a \emph{continuous} diffusion part and a \emph{discrete} part controlling motor binding, unbinding, and stepping. Mathematically, the continuous diffusion part is described by the Brownian motion $W$ in (\ref{dim}), and the discrete motor state is described by the Markov jump process $\J$. We first average over the diffusion by defining the conditional expectations
\begin{align}\label{x}
\begin{split}
x(t)&:=\E[X(t)|\J],\quad
z_{i}(t):=\E[Z_{i}(t)|\J],\quad t\ge0,\;i\in\{1,\dots,M\}.
\end{split}
\end{align}
We emphasize that (\ref{x}) are averages over paths of $W$ given a realization $\J$. Thus, $\{x(t)\}_{t\ge0}$ and $\{\{z_{i}(t)\}_{t\ge0}\}_{i=1}^{M}$ are functions of the realization $\J$. This definition is convenient, because while $X(t)$ follows the randomly switching SDE (\ref{dim}), the process $x(t)$ follows a randomly switching ODE, whose solution is known explicitly.

\begin{prop}\label{propave}
For each $t>0$, the expected cargo position $x(t)$ conditioned on a realization of the jump process satisfies 
\begin{align}
\frac{\dd }{\dd t}x(t)
&=\eps^{-1}\sum_{i=1}^{M}\big(z_{i}(t)-x(t)\big),\quad\text{almost surely}\label{xx}.
\end{align}
\end{prop}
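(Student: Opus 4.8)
The plan is to take the conditional expectation $\E[\,\cdot\mid\J]$ of the integrated form of the defining SDE (\ref{dim}) and then differentiate. Writing (\ref{dim}) in integral form gives
\begin{align*}
X(t) = X(0) + \eps^{-1}\int_0^t \sum_{i=1}^M \big(Z_i(s)-X(s)\big)\,\dd s + \sigma W(t).
\end{align*}
Conditioning on the entire jump realization $\J$ and using that $W$ is independent of $\J$ (so $\E[W(t)\mid\J]=\E[W(t)]=0$), together with Fubini's theorem to exchange the conditional expectation with the time integral, yields
\begin{align*}
x(t) = 0 + \eps^{-1}\int_0^t \sum_{i=1}^M \big(z_i(s)-x(s)\big)\,\dd s,
\end{align*}
using the initial condition $X(0)=0$ and the definitions (\ref{x}). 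The integrand $s\mapsto \sum_i (z_i(s)-x(s))$ is, for a fixed realization $\J$, a piecewise-defined function that is continuous in $s$ except at the (countably many) jump times of $\J$, so the right-hand side is an absolutely continuous function of $t$; differentiating gives (\ref{xx}) for every $t$ outside the (measure-zero, realization-dependent) jump set, which is the meaning of the ``for each $t>0$, almost surely'' statement.

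The key steps in order: (1) justify that $\E[|X(s)|]$ and $\E[|Z_i(s)|]$ are locally integrable in $s$ so that the conditional expectations in (\ref{x}) are well-defined and Fubini applies — here one can invoke the a priori moment bounds implicit in Lemmas~\ref{lemma off}--\ref{lemma finite} (which already control $\sup_{s\le t}|X(s)|$ in expectation) and the explicit representation (\ref{zp}) expressing $Z_i$ in terms of $X$ evaluated at binding times plus the bounded quantity $J_i(t)$; (2) condition on $\J$ in the integral equation and push $\E[\cdot\mid\J]$ through the integral via (conditional) Fubini, killing the martingale term $\sigma W(t)$ by independence; (3) observe that the resulting integral equation for $x(t)$ has a right-hand side that is a continuous function of $t$ between the jump times of the fixed realization $\J$, hence $x$ is $C^1$ there; (4) differentiate to obtain (\ref{xx}), and note the exceptional set of $t$ has Lebesgue measure zero, matching the ``almost surely'' qualifier (the probability-zero set of realizations where something degenerate happens, e.g.\ infinitely many jumps in finite time, is excluded by the fact that $m(t)$ is a well-behaved finite-state Markov chain with bounded rates so jump times do not accumulate).

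The main obstacle I expect is step (1)–(2): rigorously justifying the interchange of conditional expectation and the time integral, which requires a genuine integrability estimate on $\sup_{s\in[0,t]}|X(s)|$ (or at least $\int_0^t \E|X(s)|\,\dd s<\infty$) uniformly enough to apply conditional Fubini. One clean way around the subtlety is to note that, conditional on $\J$, equation (\ref{dim}) is a \emph{linear} SDE with (random but $\J$-measurable, piecewise-constant-in-structure) coefficients: the ``drift'' is linear in $(X, Z_1,\dots,Z_M)$ and the $Z_i$ themselves are, via (\ref{zp}), affine functions of the values of $X$ at earlier times plus bounded terms. So conditionally on $\J$ the system $(X, Z_1,\dots,Z_M)$ solves a linear SDE with bounded-on-compacts coefficients, whose solution has finite moments of all orders on any finite interval by standard SDE theory; this legitimizes taking conditional expectations and the interchange, and simultaneously gives that $x(t)=\E[X(t)\mid\J]$ solves the corresponding linear ODE, which is exactly (\ref{xx}). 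A secondary, more cosmetic point is to state precisely in what sense (\ref{xx}) holds at jump times of $\J$ — it holds in the sense of one-sided derivatives away from the at-most-countable jump set, which is all that is needed downstream.
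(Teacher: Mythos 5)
Your proof is correct, and it takes a genuinely different route from the paper's. The paper uses the closed-form solution of the OU process on each inter-jump interval: conditioning that explicit formula on $\J$ immediately gives the exponential form of $x(t)$ between jump times, from which (\ref{xx}) follows by differentiation, and the key observation that the jump time $\tau$, the relaxation rate $\theta$, and the index set $I(\tau)$ are all $\J$-measurable. You instead condition the integrated SDE on $\J$, kill the Brownian term by independence, and interchange $\E[\cdot\mid\J]$ with the time integral via conditional Fubini. This is more systematic and does not exploit the particular OU structure (it would work for any drift that is linear in the state with $\J$-measurable coefficients), but it requires the global-in-$t$ integrability estimate you correctly flag in your steps (1)--(2); the paper's interval-by-interval argument sidesteps that because the OU solution on a single interval visibly has all moments. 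Your observation that the exceptional $t$-set is the (countable, atomless) jump times, so ``for each $t>0$, almost surely'' holds, matches what the paper leaves implicit. One small streamlining of your integrability argument: since $\J$ almost surely has finitely many jumps in $[0,t]$ and each $z_i$ is, conditional on $\J$, an affine shift of $X$ evaluated at $\J$-measurable times, you can apply the standard Gronwall-type moment bound for linear SDEs on each of the finitely many subintervals in turn, rather than invoking the heavier machinery of Lemmas~\ref{lemma off}--\ref{lemma finite} (which were really designed for the renewal-theoretic estimates).
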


\begin{proof}
Using the explicit solution of an OU process, we have that
\begin{align}\label{ousol}
X(t)=X(\tau)e^{-\theta(t-\tau)}+\mu(1-e^{-\theta(t-\tau)})+\mathcal{M},
\end{align}
where $\tau$ is the most recent jump time of $\J$,
\begin{align*}
\tau=\sup\big\{\{0\}\cup\{s<t:\J(s-)\neq\J(s+)\}\big\},
\end{align*}
$\theta=m(\tau)\eps^{-1}$, $\mu=\frac{1}{m(\tau)}\sum_{i\in I(\tau)}Z_{i}(\tau)$, and $\mathcal{M}$ satisfies $\E[\mathcal{M}|\J]=0$. We have used the notation $f(t\pm):=\lim_{s\to t\pm}f(s)$. Hence, taking the expectation of (\ref{ousol}) conditioned on $\J$ yields
\begin{align*}
x(t)&=\E[X(\tau)e^{-\theta(t-\tau)}|\J]+\E[\mu(1-e^{-\theta(t-\tau)})|\J]\\
&=e^{-\theta(t-\tau)}\E[X(\tau)|\J]+(1-e^{-\theta(t-\tau)})\frac{1}{m(\tau)}\sum_{i\in I(\tau)}\E[Z_{i}(\tau)|\J],
\end{align*}
since $\tau$ are $\{m(s)\}_{s\ge0}$ measurable with respect to the $\sigma$-algebra generated by $\J$.
\end{proof}

\subsubsection{Separation of timescales}
We next make an observation of disparate timescales. After averaging over the diffusive noise $W$, the model effectively depends on two timescales: the relaxation time of the continuous dynamics (\ref{xx}) (characterized by the dimensionless rate $\eps^{-1}$)  and the switching times of the discrete motor dynamics (\ref{jumprates}) (characterized by the dimensionless rates $\lon,\lstep,\loff$). Even for conservative parameter estimates, the continuous timescale is much faster than the discrete switching timescale. For instance, suppose a motor exerts a Hookean force with stiffness $k=0.5$ pN/nm \cite{Furuta2013} on a spherical cargo with radius $r=1$ $\mu$m in cytosol with viscosity $\eta$ equal to that of water. It follows that $k/(6\pi\eta r)\approx 3\times10^{4}$ s${}^{-1}$, whereas $\koff(1)$ is on the order of $10^{-1}$ to $10^{1}$ s${}^{-1}$ \cite{Furuta2013}. Hence,
\begin{align}\label{kgamma}
\eps:=\koff(1)\gamma/k\approx3\textcolor{black}{\times}10^{-4}\ll1.
\end{align}
Further, $\lon,\lstep,\loff$ are roughly order one since $\kon,\kstep,\koff$ have similar orders of magnitude \cite{Furuta2013}.

Therefore, compared to the switching timescale, $x(t)$ quickly relaxes to an equilibrium between motor switches. Furthermore, we are interested in studying $\E[\Ron]$ and $V$, which depend on the behavior of $x(t)$ over the course of several motor switches. Hence, we approximate $x(t)$ by a jump process $\ol{x}(t)$ obtained from assuming $x(t)$ immediately relaxes to its equilibrium after each motor switch.

More precisely, let $(\x(t),\z_{1}(t),\dots,\z_{M}(t))\in\R^{M+1}$ be a $\J$-measurable, right-continuous process,
\begin{align*}
\x(t)=\x(t+),\quad\z_{i}(t)=\z_{i}(t+),\quad t\ge0,\;i\in\{1,\dots,M\},
\end{align*}
with $\x(0)=x(0)$ and $\z_{i}(0)=z_{i}(0)$, $i\in\{1,\dots,M\}$, that evolves in the following way. In light of (\ref{zp}), we define the effective motor positions by how they are modified through the jump process, binding at $\tau_i$ and then incrementing from stepping, or staying unbound at the cargo position $\x(t)$,
\begin{align}\label{zpb}
\z_{i}(t)=\big(\x(\tau_{i}(t))+J_{i}(t)\big)1_{\{J_{i}(t)\neq u\}}+\x(t)1_{\{J_{i}(t)=u\}},\quad i\in\{1,\dots,M\}.
\end{align}

Due to the assumed fast relaxation, $\x(t)$ only changes when a motor steps or unbinds, as newly bound motors exert no force. That is, if $J_{i}(t+)=J_{i}(t-)$ for all $i\in\{1,\dots,M\}$ satisfying $J_{i}(t-)\ge0$, then $\x(t-)=\x(t+)$. Otherwise, $\x(t)$ evolves according to the following two rules, which describe how the cargo position $\x(t)$ changes when a motor steps or unbinds.
\begin{enumerate}
\item
If the $i$-th motor steps ($J_{i}(t-)=j\ge0$ and $J_{i}(t+)=j+1$), then $\x(t+)=\x(t-)+1/m(t)$.
\item
If the $i$-th motor unbinds ($J_{i}(t-)=j\ge0$ and $J_{i}(t+)=u$), then $\x(t+)=\x(t-)+\Delta_{i,(\z_{1},\dots,\z_{m(t-)})}$, where $(\z_{1},\dots,\z_{m(t-)})$ gives the positions of the $m(t-)$ bound motors just before time $t$, and
\begin{align}\label{dj}
\Delta_{i,(\z_{1},\dots,\z_{m(t-)})}=\frac{1}{m(t-)-1}\sum_{i'=1,i'\neq i}^{m(t-)}\z_{i'}(t-)-\frac{1}{m(t-)}\sum_{i'=1}^{m(t-)}\z_{i'}(t-).
\end{align}
\end{enumerate}
In words, if either of these events occurs, the cargo position $\x(t)$ relaxes to the mean position of the motors. These two rules describe how the mean motor position changes in the two scenarios. If a single motor steps, incrementing its position by $1$, the mean motor position increases by $1/m(t)$. If a motor unbinds, \eqref{dj} is the change in the mean motor position from removing that motor.

It follows from these two evolution rules for $\overline{x}(t)$ that
\begin{align}\label{xoverg}
\x(t)=\sum_{m=1}^{M}\frac{1}{m}S_{m}(t)+\chi(t),
\end{align}
where $S_{m}(t)$ is the number of steps taken when $m$ motors are bound before time $t$ (each of which modifies the position by $1/m$), and $\chi(t)$ accounts for changes in the cargo position that result from a motor unbinding,
\begin{align}\label{deltag}
\chi(t)=\sum_{k=1}^{N_{\text{off}}(t)}\Delta_{j_{k},(\z_{1}(s_{\text{off}}^{k}-),\dots,\z_{m(s_{\text{off}}^{k}-)}(s_{\text{off}}^{k}-)},
\end{align}
where $0=s_{\text{off}}^{0}<s_{\text{off}}^{1}<\dots$ is the sequence of times in which a motor unbinds,
\begin{align*}
s_{\text{off}}^{k}:=\inf\big\{t>s_{\text{off}}^{k-1}:J_{i}(t-)\ge0\text{ and }J_{i}(t)=u\text{ for some }i\in\{1,\dots,M\}\big\}\quad k\ge1,
\end{align*}
and $N_{\text{off}}(t):=\sup\{k\ge0:s_{\text{off}}^{k}\le t\}$ is the number of unbindings before time $t\ge0$, 
and $j_{k}\in\{1,\dots,M\}$ gives the (almost surely unique) index of the motor that unbinds at time $s_{\text{off}}^{k}$. That is, $j_{k}$ satisfies $J_{j_{k}}(s_{\text{off}}^{k}-)
\neq J_{j_{k}}(s_{\text{off}}^{k})=u$.

The following proposition checks that $x(t)$ converges almost surely to the jump process $\x(t)$ as $\eps\to0$. The proof is in {Appendix \ref{sect proof prop conv}}. 
\begin{prop}\label{prop conv}
If $T\ge0$ is an almost surely finite stopping time with respect to $\{\J(t)\}_{t\ge0}$, then
\begin{align*}
\lim_{\eps\to0}x(T)=\x(T),\quad\text{almost surely}.
\end{align*}
\end{prop}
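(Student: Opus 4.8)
The plan is to condition on $\J$ and reduce to a deterministic statement about the linear ODE \eqref{xx}, whose solution is explicit. Fix a realization of $\J$ in the almost sure event on which $T<\infty$ and $\J$ makes only finitely many jumps in $[0,T]$; the latter holds because $m(t)\le M$, so the total jump rate never exceeds $M\max_{0\le m\le M}\big(\lon(m)+\loff(m)+\lstep(m)\big)<\infty$. Enumerate the jump times of $\J$ in $[0,T]$ as $0=t_{0}<t_{1}<\dots<t_{n}\le T$. On each inter-jump interval $[t_{j},t_{j+1})$ the set of bound motors $I$, the count $m_{j}:=m(t_{j})$, and the step numbers $J_{i}$ are constant, so taking $\E[\,\cdot\mid\J]$ in \eqref{zp} shows $z_{i}(t)=x(\tau_{i}(t_{j}))+J_{i}$ is constant there for each bound $i$; hence \eqref{xx} becomes $\tfrac{\dd}{\dd t}x=\eps^{-1}m_{j}(\mu_{j}-x)$ with $\mu_{j}:=\tfrac{1}{m_{j}}\sum_{i\in I}z_{i}(t_{j})$ constant (and $\tfrac{\dd}{\dd t}x=0$ when $m_{j}=0$), so that, as already recorded in the proof of Proposition~\ref{propave},
\begin{equation}\label{relaxform}
x(t)=\mu_{j}+\big(x(t_{j})-\mu_{j}\big)\,e^{-m_{j}(t-t_{j})/\eps},\qquad t\in[t_{j},t_{j+1}).
\end{equation}

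The second ingredient is a bound on $x$ uniform in $\eps$. Each $\mu_{j}$ is an average of numbers $x(\tau_{i}(t_{j}))+J_{i}$, where $\tau_{i}(t_{j})$ is a jump time before $t_{j}$ and $0\le J_{i}\le S$ with $S$ the finite, $\eps$-free total number of steps taken by time $T$. Thus if $\sup_{\eps}\sup_{s\le t_{j}}|x(s)|\le B_{j}$, then $|\mu_{j}|\le B_{j}+S$, and \eqref{relaxform} gives $\sup_{\eps}\sup_{s\le t_{j+1}}|x(s)|\le 3B_{j}+2S=:B_{j+1}$. Starting from $x(0)=0$ and iterating over the finitely many intervals produces a constant $B=B(\J,T)$, independent of $\eps$, with $\sup_{\eps>0}\sup_{t\in[0,T]}|x(t)|\le B$.

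Now induct on $j=0,1,\dots,n$ on the claim $\lim_{\eps\to0}x(t_{j})=\x(t_{j}-)$, where $x$ is continuous (so $x(t_{j})=x(t_{j}-)$) and $\x(0-):=\x(0)$. The base case is $x(0)=0=\x(0)$. For the step, the update rules for $\x$ are designed so that on $[t_{j-1},t_{j})$ the process $\x$ is constant and equal to the mean position of the motors bound there; applying the inductive hypothesis at the earlier binding times $\tau_{i}(t_{j-1})$ (at which $\x$ is continuous, since binding does not change $\x$) gives $\mu_{j-1}\to\tfrac{1}{m_{j-1}}\sum_{i\in I}\big(\x(\tau_{i}(t_{j-1}))+J_{i}\big)=\x(t_{j-1}+)=\x(t_{j}-)$. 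Combining this with \eqref{relaxform} at $t=t_{j}$, the uniform bound from Step~2, and $m_{j-1}\ge1$ (the case $m_{j-1}=0$ being trivial, since then $x$ and $\x$ are both constant on $[t_{j-1},t_{j})$), the transient term $(x(t_{j-1})-\mu_{j-1})e^{-m_{j-1}(t_{j}-t_{j-1})/\eps}$ vanishes as $\eps\to0$, so $x(t_{j})\to\x(t_{j}-)$. Finally, if $T$ lies in the interior of the last interval $(t_{n},t_{n+1})$, then \eqref{relaxform} at $t=T$, $\mu_{n}\to\x(t_{n}+)=\x(T)$, and the uniform bound give $x(T)\to\x(T)$; if $T$ is itself a jump time, the same computation evaluated from the left gives $x(T)\to\x(T-)$, which equals $\x(T)$ for the binding-type times $\ton^{k}$ and complete-detachment times $\toff^{k}$ relevant in the sequel, as $\x$ is unchanged by those events.

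The main obstacle is that Steps~2 and~3 cannot be decoupled interval by interval: the equilibrium $\mu_{j}$ of the $j$-th relaxation depends on the $\eps$-dependent values of $x$ at earlier binding times, so the limit must be propagated forward through the finitely many intervals by an induction carried in tandem with the $\eps$-uniform bound that kills every transient factor $(x(t_{j})-\mu_{j})e^{-m_{j}(\cdot)/\eps}$. (Standard singular-perturbation theorems of Tikhonov type do not apply directly because of the jumps of $\J$ and because a motor's reattachment position depends on the current cargo position, so the explicit linear relaxation \eqref{relaxform} is what makes the bare-hands argument clean.)
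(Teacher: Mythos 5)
Your proof takes essentially the same route as the paper's: condition on a realization of $\J$, use the explicit solution of the linear relaxation ODE on each of the finitely many inter-jump intervals, establish an $\eps$-uniform bound on $|x|$, and then propagate the $\eps\to0$ limit forward through the jump times (the paper does this via a telescoping $\eta$-estimate on $|x_k-\x_k|$, yours does it as a direct induction $x(t_j)\to\x(t_j-)$, but these are the same argument phrased two ways). One difference worth noting: you are more careful than the paper about the right-continuity convention for $\x$ at jump times. The paper's displayed formula for $\x_{k+1}$ in fact computes $\x(\tau_{k+1}-)$, not $\x(\tau_{k+1})$, and the proposition as literally stated can fail if $T$ is a stepping or multi-motor-unbinding time at which $\x$ has a genuine jump; you flag exactly this and observe that for the stopping times actually used later ($\ton^k$, $\toff^k$) the left and right limits of $\x$ coincide, so the conclusion goes through. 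This is a small sharpening of the argument rather than a different one.
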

From this proposition, we conclude that studying the mean behavior of the cargo position $X(t)$ can ultimately be reduced to studying the jump process $\x(t)$, where the jumps correspond to motor stepping and unbinding events.

\subsection{Run length and velocity}

Since $\eps\ll1$ for biologically relevant parameters, we investigate the run length and velocity of $X(t)$ by investigating the analogous quantities for $\overline{x}(t)$,
\begin{align}\label{overq}
\overline{R}&:=\x(\toff^{1})-\x(\ton^{1}),
\qquad
\overline{V}:=\lim_{t\to\infty}\frac{\x(t)}{t}.
\end{align}
\subsubsection{Run length}
The following proposition checks that the mean run length of the full process $X(t)$ converges to the mean run length of the jump process $\x(t)$ as $\eps\to0$.
\begin{prop}\label{prop mean}
$\E[\Ron]\to\E[\overline{R}]$ as $\eps\to0$.
\end{prop}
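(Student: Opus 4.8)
The plan is to show that $\Ron$, the on-displacement of the full process $X$, converges in expectation to $\overline{R}$, the on-displacement of the jump process $\x$, as $\eps\to0$. Recall $\Ron = X(\toff^1)-X(\ton^1)$ and $\overline{R} = \x(\toff^1)-\x(\ton^1)$. The natural decomposition is to condition on the jump process $\J$: write $\E[\Ron] = \E\big[\E[X(\toff^1)-X(\ton^1)\mid\J]\big] = \E[x(\toff^1)-x(\ton^1)]$, using the tower property and the definition $x(t)=\E[X(t)\mid\J]$ from \eqref{x} (one must check $\ton^1,\toff^1$ are $\J$-measurable stopping times, which is immediate since they depend only on $m(t)$). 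Since $\toff^1$ is an almost surely finite $\J$-stopping time, Proposition~\ref{prop conv} gives $x(\toff^1)\to\x(\toff^1)$ and $x(\ton^1)\to\x(\ton^1)$ almost surely as $\eps\to0$; hence $x(\toff^1)-x(\ton^1)\to\overline{R}$ almost surely. It then remains to upgrade this almost-sure convergence to convergence of expectations.

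For the interchange of limit and expectation I would invoke dominated convergence, which requires a $\J$-measurable, $\eps$-independent, integrable random variable dominating $|x(\toff^1)-x(\ton^1)|$ for all small $\eps$. Here I would lean on the quantitative tail bounds already established: by Jensen, $|x(t)-x(\ton^1)| = |\E[X(t)-X(\ton^1)\mid\J]| \le \E[|X(t)-X(\ton^1)|\mid\J]$, so $|x(\toff^1)-x(\ton^1)| \le \E[\Yon^1\mid\J]$ where $\Yon^1=\sup_{t\in[\ton^1,\toff^1]}|X(t)-X(\ton^1)|$ as in Lemma~\ref{lemma on}. The subtlety is that the bound of Lemma~\ref{lemma on}, $\P(\Yon^1\ge C)\le\lambda_0\sqrt{C}e^{-\lambda_1\sqrt C}$, has constants $\lambda_0,\lambda_1$ that a priori depend on $\eps$ through $\sigma$ and the OU relaxation rate. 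I would check (or argue from the proof of Lemma~\ref{lemma on}) that these constants can be chosen uniformly over $\eps\in(0,\eps_0]$ — intuitively, making $\eps$ smaller only makes the cargo relax faster and track the bound motors more tightly, so the tail of $\Yon^1$ does not worsen. Given such uniformity, $\E[\Yon^1]$ is bounded uniformly in $\eps$; but for dominated convergence I actually need a single dominating random variable, not just a uniform bound on the mean. One clean route: bound $|x(\toff^1)-x(\ton^1)|$ deterministically (given $\J$) in terms of $\overline{x}$-type quantities plus an $\eps$-controlled error — specifically, the rules \eqref{xx}–\eqref{zpb} and the explicit OU solution show $x(t)$ differs from a piecewise-relaxed version by exponentially small corrections, so $|x(\toff^1)-x(\ton^1)| \le |\overline{R}'| + (\text{something small})$ where the leading term is a $\J$-measurable random variable with finite mean by the $\overline{x}$-analogue of Lemma~\ref{lemma finite} (or directly: $\overline{R}$ is a finite sum of step and unbinding increments whose number has exponential tails via the Markov chain \eqref{eq:nMarkov}).

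The alternative, and perhaps more robust, route is to prove uniform integrability of $\{x(\toff^1)-x(\ton^1)\}_{\eps\in(0,\eps_0]}$ directly: it suffices to show $\sup_{\eps\le\eps_0}\E[|x(\toff^1)-x(\ton^1)|^{1+\alpha}]<\infty$ for some $\alpha>0$, which follows from $\sup_{\eps\le\eps_0}\E[(\E[\Yon^1\mid\J])^{1+\alpha}]\le\sup_{\eps\le\eps_0}\E[(\Yon^1)^{1+\alpha}]<\infty$, and the latter finiteness (with a uniform bound) comes from integrating the tail estimate of Lemma~\ref{lemma on} against $C^{\alpha}\,\dd C$, again provided the constants are $\eps$-uniform. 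Uniform integrability plus almost-sure convergence then yields $\E[x(\toff^1)-x(\ton^1)]\to\E[\overline{R}]$, i.e. $\E[\Ron]\to\E[\overline R]$.

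The main obstacle I anticipate is precisely the $\eps$-uniformity of the tail/moment bound for $\Yon^1$: Lemma~\ref{lemma on} is stated for fixed parameters, and its proof must be inspected to confirm that shrinking $\eps$ (equivalently speeding up the OU relaxation and shrinking the effective noise amplitude relative to the switching scale) does not degrade the constants $\lambda_0,\lambda_1$. If instead the proof of Lemma~\ref{lemma on} gives constants that blow up as $\eps\to0$, one would need the decomposition-based argument of the previous paragraph, controlling $x(t)$ via the explicit OU formula \eqref{ousol} so that the dominating variable is built from the $\overline x$ process (whose relevant moments are $\eps$-independent) plus a uniformly bounded remainder. Everything else — tower property, $\J$-measurability of $\ton^1,\toff^1$, the almost-sure convergence from Proposition~\ref{prop conv}, and the finite-mean input from Lemma~\ref{lemma finite} — is routine.
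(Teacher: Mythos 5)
Your overall structure matches the paper exactly: tower property to reduce to $\E[x(\toff^1)-x(\ton^1)]$, almost-sure convergence from Proposition~\ref{prop conv}, and then an interchange of limit and expectation. The difference is in how the interchange is justified, and here the paper's argument is substantially simpler than the two routes you emphasize. Let $N$ be the number of motor steps taken during $[\ton^1,\toff^1]$. Since each step has unit length and, conditional on $\J$, the cargo only tracks motor positions, one has the deterministic bound $x(\toff^1)-x(\ton^1)\le N$; and $N$ is $\J$-measurable and \emph{does not depend on} $\eps$, so it serves as a dominator with no uniformity issues at all. Its integrability is immediate: steps occur at Poisson rate at most $M\max_m\lstep(m)$, so $\E[N]\le M\Lambda\,\E[\Ton^1]<\infty$. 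Dominated convergence then finishes the proof.

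Your primary route — dominating via Lemma~\ref{lemma on} and worrying about whether $\lambda_0,\lambda_1$ are uniform in $\eps$ — raises a legitimate concern, but it is a concern the paper never has to face, because Lemma~\ref{lemma on} plays no role in this proposition (it is only needed for the renewal-reward theorem, i.e.\ Theorem~\ref{theorem renewal}). You do gesture at the right idea in your fallback paragraph, where you note that the dominating variable should be built from $\J$-measurable step/unbinding counts with $\eps$-independent finite mean; the fix is to recognize that the crude bound ``displacement $\le$ number of steps'' already gives you exactly that, with no OU-relaxation estimates, no $\overline{x}$-analogue of Lemma~\ref{lemma finite}, and no $L^{1+\alpha}$ uniform-integrability machinery required. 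In short: correct strategy, correct identification of the subtle point, but the dominator you reach for is far heavier than what the problem needs.
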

\begin{proof}
By the tower property of conditional expectation (Theorem 5.1.6 in \cite{Durrett2014}), we have that
\begin{align*}
\E[\Ron]
=\E[\E[\Ron|\J]]
=\E[\E[X(\toff^{1})-X(\ton^{1})|\J]]
=\E[x(\toff^{1})-x(\ton^{1})].
\end{align*}
Now, Proposition~\ref{prop conv} ensures that
\begin{align}\label{cx}
x(\toff^{1})-x(\ton^{1})\to\overline{R},\quad\text{almost surely as }\eps\to0.
\end{align}
Let $N\ge0$ be the number of steps taken between time $\ton^{1}$ and time $\toff^{1}$. Since motors take steps of distance one, we have the almost sure bound, $x(\toff^{1})-x(\ton^{1})\le N$. Steps are taken at Poisson rate $m(t)\lstep(m(t))\le M\Lambda$, \textcolor{black}{where $\Lambda:=\max_{m\in\{1,\dots,M\}}\lstep(m)$}. Thus $\E[N]\le \Lambda M\E[\toff^{1}-\ton^{1}]<\infty$. Thus, (\ref{cx}) and the bounded convergence theorem complete the proof.
\end{proof}

\subsubsection{Velocity}

Let us now investigate $\overline{V}$ in (\ref{overq}), observing that this quantity can be approached in two ways. The first exploits the observation that non-zero mean displacements only occur from motor stepping, so the velocity can be interpreted as the product of how often a step occurs with $m$ motors and the size of the displacement. The second approach is again a reward-renewal argument, noting that the only non-zero displacements occur during epochs of bound cargo. The connection between these two approaches provides explicit relationships between the velocity, run lengths, and run times.  

Recalling the decomposition of the jump process $\x$  in (\ref{xoverg}), we seek to compute the expected value 
\begin{align*}
\E[\x(t)]=\sum_{m=1}^{m}\frac{1}{m}\E[S_{m}(t)]+\E[\chi(t)].
\end{align*}
Using the definition of $\chi(t)$ in (\ref{deltag}), we compute its expectation by summing over all possible displacements from one of $m$ motors unbinding $\Delta_{j,(\z_{1}(t-),\dots,\z_{m(s_{\text{off}}^{k}-)}(t-)}$, which yields
\begin{align*}
\sum_{j=1}^{m}\Delta_{j,(z_{1},\dots,z_{m})}=\frac{1}{m-1}\sum_{j=1}^{m}\Big(-z_{j}+\sum_{i=1}^{m}z_{j}\Big)-\sum_{i=1}^{m}z_{i}=0.
\end{align*}
Since each of the bound motors is equally likely to unbind, it follows that $\E[\chi(t)]=0$. This can be interpreted as the observation that the arithmetic mean does not change in expectation when removing a randomly (uniformly) chosen element. In other words, the effects of motors unbinding ahead of the cargo are completely offset in the mean by motors unbinding behind the cargo. Therefore, the only long-term influence on $\x$ is  stepping events. 

Given a realization $\{m(s)\}_{s\ge0}$, the number of steps taken with $m$ motors bound before time $t\ge0$ is Poisson distributed with mean $m\lstep(m)\int_{0}^{t}1_{m(s)=m}\,\dd s$. Hence,
\begin{align*}
\E[S_{m}(t)]=m\lstep(m)\E\Big[\int_{0}^{t}1_{m(s)=m}\,\dd s\Big].
\end{align*}
Now, $\{m(s)\}_{s\ge0}$ is an ergodic Markov process, so the occupation measure converges almost surely to the stationary measure (see Theorem 3.8.1 in \cite{Norris1998})
\begin{align*}
\frac{1}{t}\sum_{n=1}^{M}\int_{0}^{t}1_{m(s)=m}\,\dd s\to p_{m},\quad\text{almost surely as }t\to\infty,
\end{align*}
where $p_{m}:=\lim_{t\to\infty}\P(m(t)=m)$ is the stationary probability $m$ motors are bound. We note that $p_{m}$ is the $(m+1)$-st component of the unique probability vector, $\textbf{p}\in\R^{1\times(M+1)}$ satisfying (see Theorem 3.5.2 in \cite{Norris1998})
\begin{align}\label{pformula}
\textbf{p}\widetilde{Q}=0,
\end{align}
where $\widetilde{Q}\in\R^{(M+1)\times(M+1)}$ is the generator matrix defined in section~\ref{section run time}. Since the occupation measure is bounded above by one, the bounded convergence theorem gives
\begin{align*}
\lim_{t\to\infty}\frac{\E[\x(t)]}{t}
=\lim_{t\to\infty}\sum_{m=1}^{M}\lstep(m)\E\Big[\frac{1}{t}\int_{0}^{t}1_{m(s)=m}\,\dd s\Big]=\sum_{m=1}^{M}\lstep(m)p_{m}.
\end{align*}

It is easy to see that the classical renewal reward theorem applies to $\x(t)$ so that
\begin{align*}
\sum_{m=1}^{M}\lstep(m)p_{m}
=\lim_{t\to\infty}\frac{\E[\x(t)]}{t}
=\lim_{t\to\infty}\frac{\x(t)}{t}
=\frac{\E[\overline{R}]}{\E[\Ton]+\E[\Ton]},\quad\text{almost surely}.
\end{align*}
Furthermore, (\ref{Vr}), (\ref{Roff}), (\ref{Toff}), and Proposition~\ref{prop mean} yield
\begin{align*}
\lim_{\eps\to0}V
=\lim_{\eps\to0}\frac{\E[\Ron]}{(M\lon(0))^{-1}+\E[\Ton]}
=\frac{\E[\overline{R}]}{(M\lon(0))^{-1}+\E[\Ton]}
=\sum_{m=1}^{M}\lstep(m)p_{m}.
\end{align*}

In summary, we now have explicit formulas for the velocity $V$ and expected run length $\E[\Ron]$ of $X(t)$ in the small $\eps$ limit,
\begin{align}
\lim_{\eps\to0}V&=\overline{V}=\sum_{m=1}^{M}\lstep(m)p_{m}\label{totalV}\\
\lim_{\eps\to0}\E[\Ron]&=\E[\overline{R}]=\Big(\big(M\lon(0)\big)^{-1}+\E[\Ton]\Big)\sum_{m=1}^{M}\lstep(m)p_{m},\label{formulaR}
\end{align}
where $p_{m}$ is given by (\ref{pformula}) and $\E[\Ton]$ is given by (\ref{Ton}). In Fig.~\ref{fig:RV_fig}, we compare these formulas for $\E[\overline{R}]$ and $\overline{V}$ with estimates of $\E[\Ron]$ and $V$ from simulations of the full process $(X(t),\textbf{Z}(t),\J(t))$ (for details on our statistically exact simulation method, see section~\ref{section simulation}).

Furthermore, some experimental works \cite{Furuta2013,Jonsson2015} measure the average run velocity, $\E[\overline{R}/\Ton]$. Now, if $\sigma(m)$ denotes the $\sigma$-algebra generated by $\{m(t)\}_{t\ge0}$, then recalling (\ref{tauonoff}) and (\ref{TonToff}) and using the tower property of conditional expectation yields
\begin{align*}
\E\Big[\frac{\overline{R}}{\Ton}\Big]
=\E\Big[\frac{1}{\Ton}\E[\overline{R}|\sigma(m)]\Big]
&=\E\Big[\frac{1}{\Ton^{1}}\sum_{m=1}^{M}\frac{1}{m}\E[S_{m}(\toff^{1})|\sigma(m)]\Big]\\
&=\sum_{m=1}^{M}\frac{1}{m}m\lstep(m)\E\Big[\frac{1}{\Ton^{1}}\int_{0}^{\toff^{1}}1_{m(s)=m}\,\dd s\Big].
\end{align*}
Hence, it follows from (\ref{totalV}) that
\begin{align} \label{runV}
\VV := \E[\overline{R}/\Ton]=\overline{V}/\pon,
\end{align}
where $\pon=\sum_{m=1}^{M}p_{m}$ is the stationary probability that $m(t)\ge1$.

  \begin{figure}
 \centering
 \includegraphics[width=\textwidth]{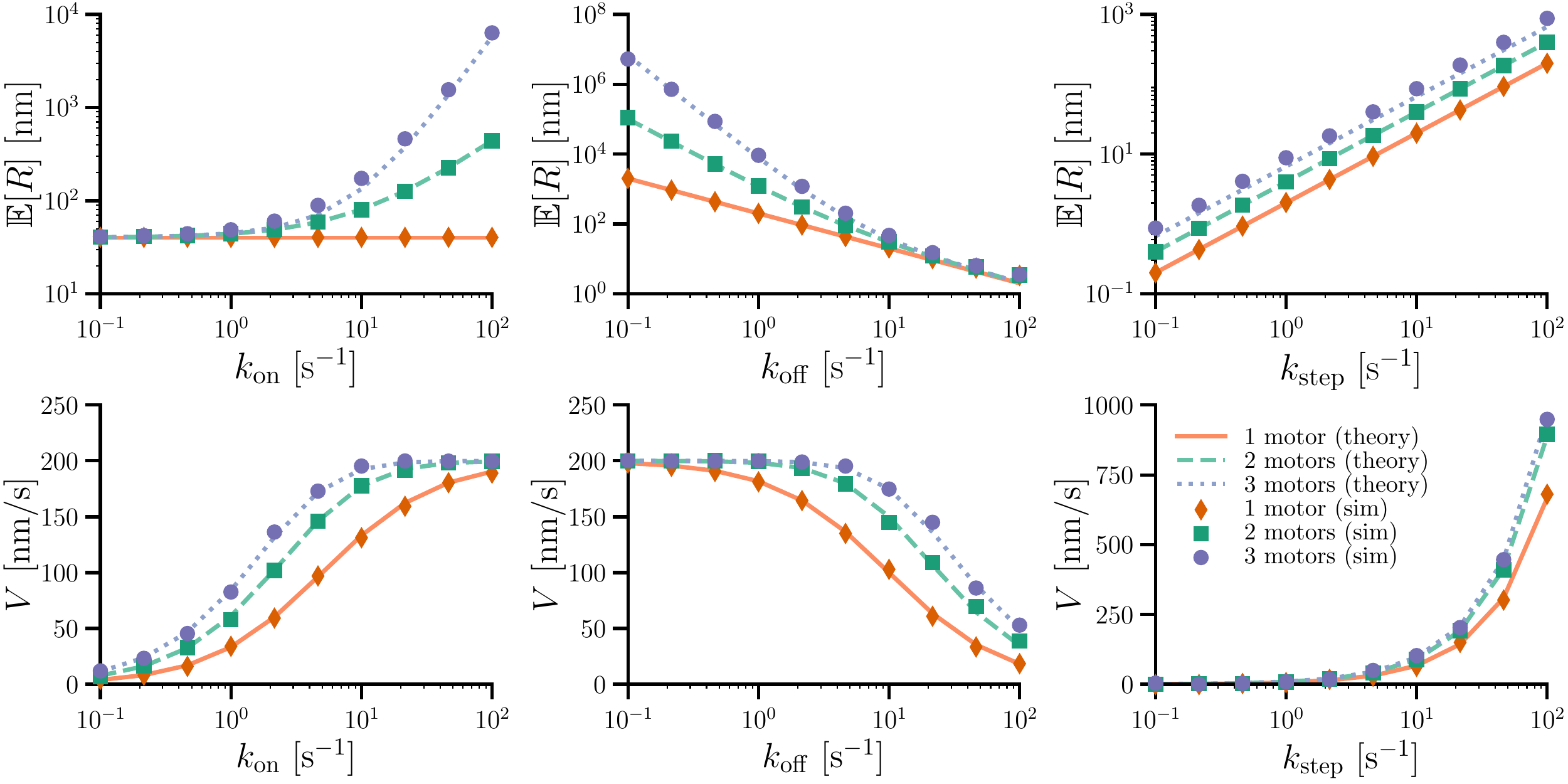}
 \caption{\label{fig:RV_fig}  Expected run lengths $\expect[R]$ and asymptotic velocities $V$ as a function of the parameters $\kon,\koff,\kstep$ for $M=1,2,3$ total motors. The curves are the analytical formulas (\ref{totalV})-(\ref{formulaR}) for the $\eps\to0$ limit, and the dots are estimates from statistically exact realizations of the full process, $\{(X(s),\textbf{Z}(s),\J(s))\}_{s=0}^{t}$, where the ending time $t$ is such that $N(t)=10^{5}$ where $N(t)$ is defined in (\ref{Ndefn}). Unless noted otherwise, $\kon(m) = 10 \left[\text{s}^{-1}\right], \kstep(m) = 20\left[\text{s}^{-1}\right], \koff(m) = 5 \left[\text{s}^{-1}\right]$ for each $m$. Further, $k$ and $\gamma$ are as in (\ref{kgamma}) and $k_{B}T=4.1\left[\text{pN}\cdot\text{nm}\right]$.
 }
 \end{figure}

\subsection{Cases $M=1$, $M=2$, and $M=3$}

In this subsection, we collect explicit formulas for the run length $\E[\overline{R}]$ and run velocity $\VV$ when the total number of motors is $M=1,2,3$. The run time $\E[\Ton]$ and net velocity $V$ can be easily deduced from these quantities using (\ref{totalV})-(\ref{formulaR}) but are omitted for brevity. 

For $M=1$ total motors, the quantities are simply
\begin{align}\label{M1}
\begin{split}
\E[\overline{R}]&=\lstep(1),\quad
\VV = \lstep(1).
\end{split}
\end{align}
For $M=2$ total motors, we find 
\begin{align}\label{M2}
\begin{split}
\E[\overline{R}]=\lstep(1)+ \frac{\lon(1)\lstep(2)}{2\loff(2)},\quad
\VV= \frac{2 \loff(2) \lstep(1)+\lon(1) \lstep(2)}{2 \loff(2)+\lon(1)}.
\end{split}
\end{align}
For $M=3$ total motors, we find {
\begin{align}\label{M3}
\begin{split}
\E[\overline{R}]&=\lstep(1)+\frac{\lon(1) (3 \loff(3) \lstep(2)+\lon(2) \lstep(3))}{3 \loff(2) \loff(3)},\\
\VV&=\frac{3 \loff(3) \left[\loff(2) \lstep(1)+\lon(1) \lstep(2)\right]+\lon(1) \lon(2) \lstep(3)}{3 \loff(3) \left[\loff(2)+\lon(1)\right]+\lon(1) \lon(2)}.
\end{split}
\end{align}
 }
To put these quantities in dimensional units, recall the jump rates (\ref{eq:rates}) and multiply $\E[\overline{R}]$ by the dimensional step distance $\delta>0$ and multiply $\VV$ by $\delta\koff(1)>0$.

\subsection{Numerical simulations}\label{section simulation}

To verify our predictions for the expected run lengths and velocities, we compare to statistically exact numerical simulations of the full process $(X(t),\textbf{Z}(t),\J(t))$. In a given state, we use the classical Gillespie stochastic simulation algorithm to generate the time of the next transition  for the Markov chain $\mathbf{J}(t)$ and to choose which transition occurs. For $m(t)\geq 1$, $X(t)$ is an OU process, generically described by 
\begin{equation*}
	\dd X(t) = \alpha \left[\mu - X(t)\right] \, \dd t + \beta \dd W(t).
\end{equation*}
To update $X(t)$ to the next time $t+\tau$, we use the statistically exact method described in \cite{Gillespie1996}, summarized by 
\begin{equation}
	X(t+\tau) = e^{-\alpha \tau} X(t) + (1-e^{-\alpha \tau}) \mu + \beta \sqrt{\frac{(1-e^{-2\alpha \tau})}{2\alpha}} \, \mathfrak{n} \label{eq:OU_update}
\end{equation}
where $\mathfrak{n}$ is a standard normal random variable. When $m(t)=0$, $X(t)$ is a pure diffusion process with $\alpha=0$, so \eqref{eq:OU_update} becomes an Euler-Maruyama update. This procedure generates statistically exact sample paths of $X(t)$, sampled at the transition times of $\mathbf{J}(t)$. We use this scheme to generate a long realization of $(X(t),\textbf{Z}(t),\J(t))$, thereby providing Monte Carlo estimates for $\E[\Ron]$ and $V$ for a given parameter set.

\section{Biological application}\label{section biology}

We now use the formulas (\ref{M1})-(\ref{M3}) for run velocity, $\VV$, and run length, $\E[\overline{R}]$, to explore the behavior of non-processive motors. The behavior of individual non-processive motors is characterized by two observations: i) short attachment times, ii) the time it takes to hydrolyze ATP (and consequently, to step) coincides with this attachment time \cite{Case1997,Foster2001,Nitzsche2016}. Concretely, Ncd motors in the kinesin-14 family take 1 to 5 steps before unbinding \cite{Astumian1999,Endow2000}. In our model, $\lstep(1)$ gives the expected number of steps before unbinding, so we characterize non-processive motors as those with $\lstep(1) \in [1,5]$.

 Using this characterization, we explore the observation made in \cite{Furuta2013,Jonsson2015,Yamada2017} that non-processive motors in the kinesin-14 family cooperate to produce long-range transport. This behavior is reported in \cite{Furuta2013,Jonsson2015} in terms of a velocity that is analogous to the run velocity $\VV$ in our model. Specifically, the primary manifestation of cooperativity is that $\VV$ increases substantially when the total number of motors increases from $M=1$ to $M=2$. For $M\geq 2$, the velocity remains relatively constant.
 
We thus ask the question: {what features are necessary to produce this behavior}?
Now, if the step rate is independent of the number of bound motors, $m$, then it follows immediately from (\ref{totalV}) and (\ref{runV}) that $\VV$ is independent of $M$. In particular, if the dimensional step rate is $\kstep(m)\equiv k_{0}$ for all $m\in\{1,\dots,M\}$, then in dimensional units, $\VV$ is simply $\delta k_{0}$, regardless of any other parameter values. 

Therefore, our model predicts that the stepping rate must depend on the number of bound motors in order to produce the cooperative behavior seen in run velocities in \cite{Furuta2013,Jonsson2015}. This prediction is bolstered by the simulation results of \cite{Furuta2013}. There, the authors constructed a detailed computational model of motor transport, and they had to improve motor stepping ability when two or more motors are bound in order for simulations of their computational model to match experimental run velocities. 

The authors of \cite{Furuta2013} also describe motor cooperativity in terms of the average distance traveled by a cargo before all of its motors detach from a microtubule, which is analogous to $\E[\overline{R}]$ in our model. Namely, they find that the run length $\E[\overline{R}]$ dramatically increases when $M$ increases from 1 to 2. Our model can replicate this cooperativity if and only if we allow the binding rate, $\kon$, and/or the unbinding rate, $\koff$, to depend on the number of bound motors, $m$. 

To illustrate, we find the parameter values needed for our model to match the measurements from \cite{Furuta2013}. However, we emphasize the qualitative results rather than the precise quantitative values of our parameters. Indeed, there are issues preventing an exact comparison of our model with the data in \cite{Furuta2013}. For example, as the authors point out, the length of the microtubules sometimes caused run lengths to be significantly altered (see Figures 1 and S6 in \cite{Furuta2013}). Furthermore, for a single motor ($M=1$), the authors report average run lengths of approximately 300 [nm], and they note that this value is necessarily an overestimate since they were unable to measure very short runs. Furthermore, this value must also be an overestimate since a single non-processive motor takes only a few steps per run (by definition of non-processive), and each step is approximately 7 [nm] \cite{Endow2000}.

We thus assume that $\lstep(1) = 4$, based on \cite{Astumian1999,Endow2000} and $\delta = 7 \, [\text{nm}]$. This gives $\E[R]=28 \, [\text{nm}]$ for $M=1$, which we use instead of the reported value in \cite{Furuta2013}. We then match the respective approximate run lengths of 1300 [nm] and 3300 [nm] for $M=1,2$ and the respective approximate run velocities of 100, 150, and 150 [nm/s] for $M=1,2,3$ reported in \cite{Furuta2013}. Using the formulas in (\ref{M1})-(\ref{M3}), this uniquely determines the stepping rates, $\kstep(1)\approx14$ [s$^{-1}$] and $\kstep(2)\approx\kstep(3)\approx21$ [s$^{-1}$], and the unbinding rate $\koff(1)\approx3.5$ [s$^{-1}$], which are all within the range of previously reported rates. The other binding/unbinding rates are not uniquely specified, but rather must satisfy the relations $\koff(2)\approx0.02\kon(1)$ and $\koff(3)\approx1.1\kon(2)$. Hence, if $\koff$ were constant in $m$, then $\kon(1)\approx200$ [s$^{-1}$] and $\kon(2)\approx3$ [s$^{-1}$].

We make two observations about this result: i) the binding rate $\kon(1)$ is an order of magnitude larger than reported values \cite{Astumian1999,Furuta2013} and ii) the binding rate decreases as the number of bound motors increases from 1 to 2. Both of these points can be explained by geometry. First, the value of $\kon(1)$ is enhanced because the single bound motor tethers the unbound motors close to the microtubule, and thus allows those motors to bind more easily. 
This binding enhancement due to geometry has precedent in motor studies. Indeed, in a different family of kinesins, it was shown to be critical for determining run lengths \cite{Feng2017}. Further, it was shown to play a critical role in enabling dynein processivity \cite{Grotjahn2017}, and it was posited as an explanation for why myosin motors can become processive when processive kinesin motors are present \cite{Hodges2009}. The authors in \cite{Braun2017} report large $\kon$ values in a model of  microtubule sliding driven by kinesin-14 motors and also speculate that this is due to tethering effects.

This effect can also be understood in terms of rebinding. If two motors are bound and one unbinds, then that motor can rapidly rebind since the bound motor keeps it near the microtubule. Such rebinding was the mechanism posited in \cite{Furuta2008} to explain the processive behavior of non-processive motors along microtubule bundles. Further, rebinding is very important in enzymatic reactions \cite{Takahashi2010,Gopich2013,Lawley2016,Lawley2017}. In that context, one incorporates rebinding by using an ``effective'' unbinding rate, which is the intrinsic unbinding rate multiplied by the probability that the particle does not rapidly rebind \cite{Lauffenburger1993}. Hence, this effect could be included in our model by reducing $\koff(2)$ rather than (or in addition to) increasing $\kon(1)$. Importantly, this is exactly what is implied by the relation, $\koff(2)\approx0.02\kon(1)$, derived above.

Second, geometric exclusion effects can explain a decrease in binding rates as the number of bound motors increases from 1 to 2. 
When more motors are bound, it is more difficult for additional motors to bind because the range of diffusive search is reduced for unbound motors. In numerical investigations of motor transport systems, this exact effect is observed \cite{Korn2009,Lombardo2017}. Furthermore this decrease in binding rate can arise due to motors competing for binding sites, a point posited in \cite{Klumpp2005}. Interestingly, these authors find that negative cooperativity has little impact on transport velocity. The same is true in our model, as the value of $\VV$ changes by less than 1 [nm/s] as $\kon(2)$ ranges from 0 to $\infty$ while keeping the other parameters fixed. However, we note that the run length for $M=3$ is greatly affected by $\kon(2)$, and thus this highlights the importance of using both run velocity and run length to study motor transport.
  
\section{Discussion}

In this work, we formulated and analyzed a mathematical model of transport by non-processive molecular motors. We deliberately made our model simple enough to enable us to extract explicit formulas for experimentally relevant quantities, yet maintain agreement with detailed computational studies.  One such simplification is to assume the motor stepping and unbinding rates are independent of force. The justification for this assumption is that since non-processive motors take only a few steps before unbinding (compared to hundreds of steps by processive motors), these motors are unlikely to be stretched long distances and therefore are unlikely to generate large forces. This assumption on the stepping rate has been made in other models involving non-processive motors  \cite{Nitzsche2016} and did not appear to be a necessary feature in that context, \textcolor{black}{and how force affects stepping is not completely clear \cite{Pechatnikova1999}. However, we note that force-dependent unbinding can be an important characteristic of processive motors, as kinesin-1 and kinesin-2 detach more rapidly under assisting load than under opposing load \cite{Arpag2014}, which increases run velocity and run length.}

These limitations notwithstanding, our model makes some concrete predictions about motor number-dependent stepping, binding, and unbinding behavior and how these quantities contribute to transport by non-processive motors. Specifically, we observe that a complex cooperativity mechanism appears to be a necessary ingredient for non-processive motor transport, and these predictions align with several recent experimental and computational works. Furthermore, these predictions can be investigated experimentally. Indeed, we hope that the work here will spur further investigation into how geometry affects non-processive motor transport, especially given that kinesin-14 motors are known to transport a wide variety of cargo, including long, cylindrical microtubules \cite{Hallen2008,Fink2009} and large, spherical vesicles in plants \cite{Yamada2017}.

\appendix

\section{Proofs of Lemmas~\ref{lemma off}-\ref{lemma finite}}\label{section lemma proofs}

\begin{proof}[Proof of Lemma~\ref{lemma off}]
Between time $\toff^{k-1}$ and $\ton^{k}$, the cargo is freely diffusing. Therefore, to control $\Yoff^{k}$, we need to control the supremum of a Brownian motion. Now, for any fixed $T>0$ and $C>0$, it follows from Doob's martingale inequality (Theorem 3.8(i) in \cite{Karatzas1991}) and symmetry of Brownian motion that
\begin{align}\label{bms}
\P(\sup_{t\in[0,T]}|W(t)|\ge C)\le 2\exp\Big(\frac{-C^{2}}{2T}\Big).
\end{align}
Hence, it follows that
\begin{align}\label{eb}
 \P(\Yoff^{k}\ge C|\Toff^{k})
 \le2\exp\Big(\frac{-C^{2}}{2\sigma^{2}\Toff^{k}}\Big),\quad\text{almost surely}.
\end{align}
Note that (\ref{eb}) is an average over realizations of the diffusion $W$ for fixed realizations of the time $\Toff^{k}$. That is, the inequality holds for almost all realizations of $\Toff^{k}$.

Now, $\Toff^{k}$ is exponentially distributed with rate $M\lon(0)$. Hence, the tower property of conditional expectation (see Theorem 5.1.6 in \cite{Durrett2014}) yields
\begin{align}\label{intoff}
 \P(\Yoff^{k}\ge C)
 =\E[\P(\Yoff^{k}\ge C|\Toff^{k})]
 &\le2M\lon(0)\int_{0}^{\infty}\exp\Big(\frac{-C^{2}}{2\sigma^{2}t}-M\lon(0)t\Big)\,\dd t.
\end{align}
Now, we have that
\begin{align}\label{genint}
\int_{0}^{\infty}\lambda e^{-\lambda t}e^{-a/t}\, \dd t
=2\sqrt{a\lambda}K_{1}(2\sqrt{a\lambda}),\quad\text{if }a>0,\,\lambda>0,
\end{align}
where $K_{1}(x)$ denotes the modified Bessel function of the second kind. Hence, the proof is complete after combining (\ref{intoff}) and (\ref{genint}) and the following bound, 
\begin{align*}
K_{1}(x)\le\sqrt{\pi/x}(1+1/(2x))e^{-x},\quad x>0,
\end{align*}
which was proven in \cite{Yang2017}.
\end{proof}

\begin{proof}[Proof of Lemma~\ref{lemma on}]
To control $\Yon^{k}$, we note that \textcolor{black}{if $t\in[\ton^{k},\toff^{k}]$, then $X(t)$ is an OU process centered at $\mu(t)=\frac{1}{m(t)}\sum_{i\in I(t)}Z_{i}(t)$ with relaxation rate $\theta(t)=\eps^{-1}m(t)$. Hence, after shifting time and space so that $\ton^{k}=0$ and $X(\ton^{k})=0$, we have that
\begin{align}\label{se12}
X(t)
=\int_{0}^{t}\mu(s)\theta(s)e^{-\int_{s}^{t}\theta(s')\,\dd s'}\,\dd s
+\sigma\int_{0}^{t}e^{-\int_{s}^{t}\theta(s')\,\dd s'}\,\dd W(s).
\end{align}
Since each bound motor takes steps of unit length at a Poisson rate, and since a motor binds at the current cargo location, it follows that
\begin{align}\label{se13}
|\mu(t)|
\le P(t)+X_{\text{sup}}(t),
\quad\text{where }X_{\text{sup}}(t):=\sup_{t'\in[0,t]} |X(t')|,
\end{align}
and $\{P(s)\}_{s\ge0}$ is a Poisson process with rate $\mu:=M\max_{m\in\{1,\dots,M\}}\lstep(m)$. Thus,
\begin{align}\label{ns3}
|X(t)|
\le
\sigma\Big|\int_{0}^{t}e^{-\int_{s}^{t}\theta(s')\,\dd s'}\,\dd W(s)\Big|
+P(t)
+\int_{0}^{t}X_{\text{sup}}(s)\theta(s)e^{-\int_{s}^{t}\theta(s')\,\dd s'}\,\dd s,
\end{align}
where we have used the fact that
\begin{align}\label{geu}
\int_{0}^{t_{0}}f(s)\theta(s)e^{-\int_{s}^{t_{0}}\theta(s')\,\dd s'}\,\dd s
\le
\int_{0}^{t}f(s)\theta(s)e^{-\int_{s}^{t}\theta(s')\,\dd s'}\,\dd s
\le f(t),\quad 0\le t_{0}\le t,
\end{align}
if $f$ is any nonnegative and nondecreasing function.}

\textcolor{black}{Now, a straightforward calculation using integration by parts shows that
\begin{align}\label{ibp}
\sup_{t'\in[0,t]}\Big|\int_{0}^{t'}e^{-\int_{s}^{t'}\theta(s')\,\dd s'}\,\dd W(s)\Big|
\le2\sup_{s\in[0,t]}|W(s)|.
\end{align}
Therefore, combining (\ref{ns3}), (\ref{geu}), and (\ref{ibp}) yields
\begin{align}\label{gp}
X_{\text{sup}}(t)
\le
\alpha(t)
+\int_{0}^{t}X_{\text{sup}}(s)\theta(s)e^{-\int_{s}^{t}\theta(s')\,\dd s'}\,\dd s,
\end{align}
where $\alpha(t):=2\sigma\sup_{s\in[0,t]}|W(s)|+P(t)$. Multiplying (\ref{gp}) by $e^{\int_{0}^{t}\theta(s')\,\dd s'}$, applying Gronwall's inequality, and then dividing by $e^{\int_{0}^{t}\theta(s')\,\dd s'}$ yields
\begin{align*}
X_{\text{sup}}(t)
\le
\alpha(t)+\int_{0}^{t}\alpha(s)\theta(s)\,\dd s,
\end{align*}
Since $\alpha(t)$ is nondecreasing and $\theta(t)\le\Theta:=\eps^{-1}M$, we obtain $X_{\text{sup}}(t)\le\alpha(t)(1+\Theta t)$. Therefore, we have the following almost sure inequality for $\zeta:=(1+\Theta \Ton^{k})^{-1}$,
\begin{align}
 \P(\Yon^{k}\ge C|\Ton^{k})
&\le\P\big(P(\Ton^{k})\ge C\zeta/2\big|\Ton^{k}\big)+\P\big(\sup_{t\in[0,\Ton^{k}]}|W(t)|\ge C\zeta/(4\sigma)\big|\Ton^{k}\big).\label{inc2}
\end{align}}

We thus need to control the distribution of $\Ton^{k}$. 
\textcolor{black}{Now, the Markov chain $m(t)$ in (\ref{eq:nMarkov})} is a finite state space birth-death process, and thus there exists \cite{Cavender1978} a unique quasi-stationary distribution $\nu\in\R^{M}$, which is a probability measure on $\{1,\dots,M\}$ so that if $\P(m(0)=m)=\nu_{m}$ for $m\in\{1,\dots,M\}$, then
\begin{align*}
\P\big(m(t)=m\big|m(s)\neq0\text{ for all }s\in[0,t]\big)=\nu_{m},\quad m\in\{1,\dots,M\}.
\end{align*}
Furthermore, it is known that the first passage time of $m(t)$ to state $0$ is exponentially distributed with some rate $\lambda>0$ if $\P(m(0)=m)=\nu_{m}$ for $m\in\{1,\dots,M\}$ \cite{Meleard2012}. \textcolor{black}{Now, $\Ton$ is the first passage time of $m(t)$ to state $m=0$ starting from state $m=1$, which must be less than the first passage time to state $m=0$ starting from any other state. Therefore, if $S$ is the first passage time to state $m=0$ starting from this quasi-stationary distribution, then}
\begin{align*}
\P(\Ton>T)\le\P(S>T)=1-e^{-\lambda T},\quad T>0.
\end{align*}
Thus, since both terms in the upper bound in (\ref{inc2}) are increasing functions of the realization $\Ton^{k}>0$, the tower property yields
\textcolor{black}{ for $\zeta_{S}:=(1+\Theta S)^{-1}$,
\begin{align}
\begin{split}\label{set4}
\P(\Yon^{k}\ge C\zeta)
&=\E\big[\P(\Yon^{k}\ge C\zeta|\Ton^{k})\big]\\
&\le\E\big[\P(P(S)\ge C\zeta_{S}/2|S)\big]+\E\big[\P(\sup_{t\in[0,S]}|W(t)|\ge C\zeta_{S}/(4\sigma)|S)\big].
\end{split}
\end{align}
}

Next, if $P$ is Poisson distributed with mean $\textcolor{black}{\mu_{0}}$, then Corollary 6 from \cite{short13} yields
\textcolor{black}{\begin{align*}
\P(P\ge C_{0})
\le e^{C_{0}-\mu_{0}}\Big(\frac{\mu_{0}}{C_{0}}\Big)^{C_{0}},\quad \text{if }C_{0}\ge\mu_{0}.
\end{align*}
Noting that $\frac{C}{2}< \mu S(1+\Theta S)$ if and only if $S>\Sigma:=\frac{\sqrt{2C\Theta/\mu+1}-1}{2\Theta}$, we obtain
\begin{align}\label{as3}
\begin{split}
&\P(P(S)\ge C\zeta_{S}/2|S)
\le
1_{S>\Sigma}+e^{\frac{C}{2(1+\Theta S)}-\mu S}\Big(\frac{2\mu S(1+\Theta S)}{C}\Big)^{\frac{C}{2(1+\Theta S)}}1_{S\le\Sigma}.
\end{split}
\end{align}}
\textcolor{black}{Since $S\sim\text{Exponential}(\lambda)$, taking the expectation of (\ref{as3}) gives 
\begin{align}\label{eg67}
E[\P(P(S)\ge C\zeta_{S}/2|S)]
\le
e^{-\lambda\Sigma}
+\int_{0}^{\Sigma}
\lambda e^{\frac{C}{2(1+\Theta s)}-(\mu+\lambda) s}\Big(\frac{2\mu s(1+\Theta s)}{C}\Big)^{\frac{C}{2(1+\Theta s)}}\,\dd s.
\end{align}
A quick calculation shows that if $C\ge(\Theta\eps_{0})^{-2}$ for $\eps_{0}:=(8e\mu\Theta)^{-1/2}$, then 
\begin{align*}
\lambda e^{\frac{C}{2(1+\Theta s)}-(\mu+\lambda) s}\Big(\frac{2\mu s(1+\Theta s)}{C}\Big)^{\frac{C}{2(1+\Theta s)}}
\le
\begin{cases}
\lambda2^{-\kappa_{0}\sqrt{C}} & \text{if }s\in[0,\eps_{0}\sqrt{C}],\\
\lambda e^{-\lambda \eps_{0}\sqrt{C}} & \text{if }s\in[\eps_{0}\sqrt{C},\Sigma],
\end{cases}
\end{align*}
for $\kappa_{0}:=(4\Theta\eps_{0})^{-1}$. Hence, if $C\ge(\Theta\eps_{0})^{-2}$, then
\begin{align}\label{bound111}
E[\P(P(S)\ge C\zeta_{S}/2|S)]
\le\lambda e^{-\lambda\Sigma}+\Sigma\lambda(2^{-\kappa_{0}\sqrt{C}}+e^{-\lambda \eps_{0}\sqrt{C}}).
\end{align}
}

\textcolor{black}{Moving to the second term in (\ref{set4}), we have that (\ref{bms}) yields
\begin{align*}
\E\big[\P(\sup_{t\in[0,S]}|W(t)|\ge C\zeta_{S}/(4\sigma)|S)\big]
\le\int_{0}^{\infty}2\lambda e^{-\lambda s}\exp\Big\{-\frac{\kappa C^{2}}{s(1+\Theta s)^{2}}\Big\}\,\dd s,
\end{align*}
where $\kappa=(32\sigma^{2})^{-1}$. It is straightforward to check that
\begin{align*}
2\lambda e^{-\lambda s}\exp\Big\{-\frac{\kappa C^{2}}{s(1+\Theta s)^{2}}\Big\}
\le
\begin{cases}
2\lambda\exp\Big\{-\frac{\kappa \sqrt{C}}{\Theta^{2}+2\Theta C^{-1/2}+C^{-1}}\Big\} & \text{if }s\in[0,\sqrt{C}],\\
2\lambda e^{-\lambda s} & \text{if }s\in[\sqrt{C},\infty),
\end{cases}
\end{align*}
Therefore, for sufficiently large $C$, we have that
\begin{align}\label{bound222}
\E\big[\P(\sup_{t\in[0,S]}|W(t)|\ge C\zeta_{S}/(4\sigma)|S)\big]
\le
2\lambda\sqrt{C} e^{-(2\kappa/\theta^{2})\sqrt{C}}+2e^{-\lambda\sqrt{C}}
\end{align}
Combining (\ref{set4}), (\ref{bound111}), and (\ref{bound222}) completes the proof.}
\end{proof}

\begin{proof}[Proof of Lemma~\ref{lemma bc}]
 \textcolor{black}{Since $Y_{k}\le\Yoff^{k}+\Yon^{k}$ for $k\ge1$, we have that
 \begin{align}\label{series3}
 \sum_{k=1}^{\infty}\P(Y_{k}>\sqrt{k})
 &\le\sum_{k=1}^{\infty}\P(\Yoff^{k}>\tfrac{1}{2}\sqrt{k})
+ \sum_{k=1}^{\infty}\P(\Yon^{k}>\tfrac{1}{2}\sqrt{k}).
 \end{align}
Therefore, the upper bounds established in Lemmas~\ref{lemma off} and \ref{lemma on} and the integral test show that (\ref{series3}) converges. Applying the Borel-Cantelli lemma (Theorem 2.3.1 in \cite{Durrett2014}) completes the proof.}
\end{proof}

\begin{proof}[Proof of Lemma~\ref{lemma finite}]
Since $Y_{k}\ge0$ almost surely, we have that $\E[Y_{k}]=\int_{0}^{\infty}\P(Y_{k}>C)\,\mathrm{d}C$. Using the bounds in Lemmas~\ref{lemma off} and \ref{lemma on} as in the proof of Lemma~\ref{lemma bc} shows that this integral is finite.
\end{proof}

\section{Proof of Proposition \ref{prop conv}} \label{sect proof prop conv}
\begin{proof}
Fix a realization $\J$. Let $K\ge0$ denote the almost surely finite number of jump times of $\J$ before time $T$, where $t$ is said to be a jump time if $\J(t+)\neq \J(t-)$. Denote these $K$ jump times by $0<\tau_{1}<\dots<\tau_{K}<T$ and let $\tau_{0}=0$ and $\tau_{K+1}=T$. 

For ease of notation, define the sequences
\begin{align*}
x_{k}:=x(\tau_{k}),\quad
\x_{k}:=\x(\tau_{k}),\quad
z_{k}^{i}:=z_{i}(\tau_{k}),\quad
\z_{k}^{i}:=\z_{i}(\tau_{k}),\quad
m_{k}:=m(\tau_{k}),
\end{align*}
for $k\in\{0,1,\dots,K\}$. Further, define the time between jumps, $s_{k}:=\tau_{k}-\tau_{k-1}$, for $k\in\{1,\dots,K\}$. It follows immediately from Proposition~\ref{propave} that
\begin{align}\label{it}
x_{k+1}=x_{k}e^{-s_{k+1}/\eps}+\mu_{k+1}(1-e^{-s_{k+1}/\eps}),\quad k\in\{0,1,\dots,K\},
\end{align}
where for $k\in\{0,1,\dots,K+1\}$ we define
\begin{align}\label{mu}
\mu_{k+1}
:=\begin{cases}
\frac{1}{m_{k}}\sum_{i\in I(\tau_{k})}z_{k}^{i} & \text{if }m_{k}>0,\\
x_{k} & \text{if }m_{k}=0.
\end{cases}
\end{align}
Furthermore, it follows from the definition of $\x(t)$ that for $k\in\{0,1,\dots,K\}$,
\begin{align}\label{xb}
\x_{k+1}
:=\begin{cases}
\frac{1}{m_{k}}\sum_{i\in I(\tau_{k})}\z_{k}^{i} & \text{if }m_{k}>0,\\
\x_{k} & \text{if }m_{k}=0.
\end{cases}
\end{align}

Now, since motors take steps of size one, it follows that if $k\in\{0,\dots,K\}$ and $i\in\{1,\dots,M\}$, then $0\le z_{k}^{i}\le K+1$ and $0\le x_{k}\le K+1$. Hence, if $k\in\{0,\dots,K\}$, then (\ref{mu}) implies
\begin{align}\label{bk}
|x_{k}-\mu_{k+1}|<K+1.
\end{align}

Next, we claim that if $k\in\{0,\dots,K\}$ and 
\begin{align}\label{c1}
\max_{j\in\{0,\dots,k\}}\Big\{|x_{j}-\x_{j}|,\max_{i\in\{1,\dots,M\}}|z_{j}^{i}-\z_{j}^{i}|\Big\}<\eta,
\end{align}
then
\begin{align}\label{c2}
\max\Big\{|x_{k+1}-\x_{k+1}|,\max_{i\in\{1,\dots,M\}}|z_{k+1}^{i}-\z_{k+1}^{i}|\Big\}<(K+1)e^{-s_{k+1}/\eps}+\eta.
\end{align}
To see this, we use (\ref{it}) and (\ref{bk}) to obtain
\begin{align*}
|x_{k+1}-\x_{k+1}|
&=|x_{k}e^{-s_{k+1}/\eps}+\mu_{k+1}(1-e^{-s_{k+1}/\eps})-\x_{k+1}|\\
&\le(K+1)e^{-s_{k+1}/\eps} + |\mu_{k+1}-\x_{k+1}|.
\end{align*}
Using (\ref{mu}) and (\ref{xb}), we have that
\begin{align*}
|\mu_{k+1}-\x_{k+1}|
&\le\begin{cases}
\frac{1}{m_{k}}\sum_{i\in I(\tau_{k})}|z_{k}^{i}-\z_{k}^{i}| & \text{if }m_{k}>0,\\
|x_{k}-\x_{k}| & \text{if }m_{k}=0.
\end{cases}
\end{align*}
Furthermore, it follows from (\ref{zp}) and (\ref{zpb}) that 
\begin{align*}
|z_{k+1}^{i}-\z_{k+1}^{i}|\le \max_{j\in\{0,\dots,K+1\}}|x_{j}-\x_{j}|,\quad i\in\{1,\dots,M\}.
\end{align*}
Hence, the claim (\ref{c2}) is verified.

Define the largest time between jumps, $s:=\max_{k\in\{1,\dots,K\}}s_{k}$. Since $x_{0}=\x_{0}=z_{0}^{i}=\z_{0}^{i}$ for $i\in\{1,\dots,M\}$, we apply (\ref{c1}) and (\ref{c2}) iteratively to obtain
\begin{align*}
|x_{K+1}-\x_{K+1}|\le(K+1)^{2}e^{-s/\eps}.
\end{align*}
Taking $\eps\to0$ completes the proof.
\end{proof}

\bibliography{library}
\bibliographystyle{siam}
\end{document}